\newcommand{\desk}{\top}
\DeclareSymbolFont{bbold}{U}{bbold}{m}{n}
\DeclareSymbolFontAlphabet{\mathbbold}{bbold}
\newcolumntype{x}[1]{%
>{\centering\hspace{0pt}}p{#1}}%
\begin{document}

\title{\huge{Jets and differential linear logic}}
\author{James Wallbridge
}
\date{}
\maketitle

\begin{center}
\textbf{Abstract}
\end{center}

We prove that the category of vector bundles over a fixed smooth manifold and its corresponding category of convenient modules are models for intuitionistic differential linear logic.  The exponential modality is modelled by composing the jet comonad, whose Kleisli category has linear differential operators as morphisms, with the more familiar distributional comonad, whose Kleisli category has smooth maps as morphisms.  Combining the two comonads gives a new interpretation of the semantics of differential linear logic where the Kleisli morphisms are smooth local functionals, or equivalently, smooth partial differential operators, and the codereliction map induces the functional derivative.  This points towards a logic and hence computational theory of non-linear partial differential equations and their solutions based on variational calculus.

\tableofcontents

\newpage

\section{Introduction}

In this paper we study differential linear logic \citep{Eh3} through the lens of the category of vector bundles over a smooth manifold.  We prove that a number of categories arising from the category of vector bundles are models for intuitionistic differential linear logic.  This is part of a larger project aimed at understanding the interaction between differentiable programming, based on differential $\lambda$-calculus \citep{ER}, and differential linear logic with a view towards extending these concepts to a language of non-linear partial differential equations.  Since morphisms come from proofs in differential linear logic, and proofs are identified with programs in differential $\lambda$-calculus (via the Curry-Howard correspondence), the denotational semantics here provide tools for differentiable programming.

From a machine learning perspective, the work here suggests the possibility of a non-linear ``differential equation search" using gradient based optimization given some input and output boundary conditions in analogy with continuous ``function search", for example, searching among a subspace of the space of functional programs to find a program satisfying given constraints.  A particular case of the latter is neural architecture search given input-output data pairs.  This data driven programming has recently received attention in relation to the optimization of neural networks with various approaches to making the constituent functional blocks ``smooth" (see \citep{Gr,Zo,Ph} for selected works).  More recently, such tools have been used to solve certain partial differential equations with initial steps towards equation search (see \citep{La,We,Lo} for selected works).  

A more foundational approach to these questions, in the spirit of this paper, was recently proposed in \citep{Ke}.  With an eye towards non-linear phenomenon arising in a diverse range of scientific applications, we move beyond vector spaces to families of vector spaces parametrized by a smooth manifold.

There exist a number of approaches to the categorical semantics of differential linear logic in the literature.  These include K\"othe sequence spaces \citep{Eh1}, finiteness spaces \citep{Eh2}, convenient vector spaces \citep{BET} and vector spaces themselves \citep{CM}.  Our approach begins by considering a smooth generalization of \citep{CM} where our underlying objects are vector spaces parametrized by a fixed base manifold $M$.  More precisely, to a formula $A$ in differential linear logic, we associate the sheaf $\sE$ of sections of a vector bundle $E$ on $M$.  When $E$ is the trivial line bundle, then the associated denotation is simply the sheaf $\sC^\infty_M$ of smooth functions on $M$. 

We prove that there are two natural comonads on the category of vector bundles to model the exponential modality of linear logic.  Firstly, there is the jet comonad $!_j$ introduced in \citep{Mar} which sends a sheaf $\sE$ to the sheaf $!_j\sE$ of infinite jets of local sections of $E$.  An element of the exponentiation of a formula is an equivalence class of sections of a vector bundle with the same Taylor expansion at each point of $M$.  The idea of a syntactic Taylor expansion in linear logic and $\lambda$-calculus through the exponential connective \citep{ER,ER2} is therefore explicity present here in the semantics of vector bundles.  Working in the general setting of infinite jets, as opposed to $r$-jets for a fixed $r\in\mathbb{N}$, forces us to work in the enlarged category of pro-ind vector bundles \citep{GP}.  The objects in this category are (co)filtered objects in the category of vector bundles on $M$.  

In fact, to leverage better formal and functional analytic properties of vector bundles, especially in relation to dual objects, we move from pro-ind vector bundles to the category of convenient $\sC^\infty_M$-modules.  These are $\sC_M^\infty$-module objects in the category of sheaves of convenient vector spaces on $M$.  Sheaves of convenient vector spaces (see \citep{FK,KM} for the theory of convenient spaces) are a class of sheaves of infinite dimensional vector spaces which are general enough to include sections of an arbitrary vector bundle on a smooth manifold but which also retain excellent formal properties.  For example, the category of such objects is complete, cocomplete and closed symmetric monoidal.  The jet comonad $!_j$ descends to this category. 

The jet construction makes direct contact with the theory of linear differential operators and linear partial differential equations which enables us to understand these concepts within the setting of differential linear logic.  The Kleisli category for the jet comonad is the category of convenient vector bundles $\sE$ on $M$ and whose morphisms $!_j\sE\ra\sE'$ are linear differential operators.  This category is equivalent to the category of infinitely prolongated linear partial differential equations with $M$ as its manifold of independent variables.  Equivalently, as is always the case for the Kleisli category of a comonad, the objects are cofree $!_j$-coalgebras.  We prove that the category of convenient $\sC^\infty_M$-modules with the jet comonad is a symmetric monoidal storage category in the sense of \citep{BCLS}.  
 
The second comonad we consider will be called the distributional comonad.  Providing a symmetric monoidal storage category, which is moreover additive in an appropriate sense, with a codereliction map which models the differential in the context of linear logic, defines a model for intuitionistic differential linear logic.  It has been shown in \citep{BET} that the category of convenient vector spaces is a model for intuitionistic differential linear logic where the comonad is the map sending a convenient vector space to the Mackey-closure of the linear span of its Dirac distributions.  When the convenient vector space is finite dimensional, this is simply the space of distributions with compact support.  We extend this result to the category of convenient $\sC^\infty_M$-modules and continuous linear morphisms.  

The Kleisli category of the distributional comonad $!_\delta$ is the category of convenient $\sC^\infty_M$-modules and smooth morphisms.  The codereliction
\[   \bar{d}^\delta_\sE:\sE\ra !_\delta\sE  \] 
sends a section $s$ to $\lim_{h\ra 0}\frac{\delta_{hs}-\delta_0}{h}$.  The differential of a smooth functional $F:\sE\ra\sE'$ is then the linear map
\[   \tu{d}F:\sE\multimap(\sE\Rightarrow\sE')  \]
sending $(s,t)$ to the functional derivative $\tu{d}F(s,t)$ of $s$ in the direction $t$.  A familiar example is when $\sE$ and $\sE'$ are both the sheaf $\sC^\infty_M$ of smooth functions.  Then $F$ is an element of the continuous linear dual $(\sC^\infty_M)^\bot$.  Using the canonical evaluation pairing, this sheaf is isomorphic to the sheaf of compactly supported distributional densities on $M$.  

Combining the two comonads $!_j$ and $!_\delta$, which are proved to compose in the appropriate sense, becomes quite powerful.  We prove that the category of convenient $\sC^\infty_M$-modules with the composite comonad $!_\delta\circ!_j$ is a model for intuitionistic differential linear logic.  The codereliction
\[   \bar{d}^{j\delta}_\sE:\sE\ra !_{j\delta}\sE  \] 
sends a section $s$ to $\lim_{h\ra 0}\frac{\delta_{h(j(s))}-\delta_0}{h}$.  In this case we have a logic of smooth local functionals where a functional is local if and only if the value of its variables at a point $x$ in $M$ depends only on its infinite jet at that point.  These functionals are also known as Lagrangians and the functional derivative of a Lagrangian $L$ encodes the Euler-Lagrange equations (plus a total derivative).  The functional equation $\tu{d}L=0$ then encodes the space of solutions to the equations of motion.  Morphisms $!_\delta!_j\sE\ra\sE'$ from the Kleisli category are interpreted as smooth differential operators.  The interaction between these comonads shows how to pass between linear and non-linear objects.  More work is needed to understand the logic rules underlying this structure.  For linear partial differential equations with constant coefficients, this has been explored in \citep{Ke}. 

We end the paper by discussing how the above structure arises in the case where our vector bundle denotation is the trivial line bundle and our local functional is the Lagrangian for a free or self-interacting scalar field on an aribitrary Riemannian manifold.  In this case its convenient $\sC^\infty_M$-module is the sheaf of smooth functions on the manifold and the variational calculus leads to the space of solutions to the scalar field equations.  

\begin{rmk}
Most categories arising in linear logic \citep{Gi2}, objects of which include function spaces, are non-reflexive, ie. there is no canonical isomorphism between an object and its double dual.  This is also the case in our examples.  In \citep{Gi1}, Girard explored the denotational semantics of classical linear logic using the notion of coherent Banach space.  Adding coherence solves the issue of obtaining a monoidal category of reflexive objects.  However, one of the shortcomings of that model is a natural closed structure.  Another way of saying this is that coherent Banach spaces do not form a $*$-autonomous category.  Moreover, one cannot take the $*$-autonomous completion of the category of Banach spaces since they themselves do not form a closed symmetric monoidal category.  One can extend our results to the setting of classical differential linear logic by taking the $*$-autonomous completion of the closed symmetric monoidal category of convenient $\sC^\infty_M$-modules.  This completion, whose origins essentially go back to Mackey \citep{Ma}, is called the Chu-construction \citep{Ba} and is the universal way of overcoming the problem of reflexivity.  This remedy pairs each space with a subspace of its continuous linear dual in order to obtain the required canonical isomorphism $\sE\simeq\sE^{\bot\bot}$.  
\end{rmk}

\section*{Relation to other work}

We consider vector bundles over a fixed base manifold $M$.  In the case of the distributional comonad with $M=*$, our results correspond to those of \citep{BET}, ie. the category of convenient $\sC^\infty_M$-modules reduces to the category of convenient vector spaces and the models for intuitionistic differential linear logic agree.  

When introducing the jet comonad, we have an interpretation of differential operators and linear partial differential equations within the logic.  Similar structure in the form of linear partial differential operators with constant coefficients has recently been studied in \citep{Ke} in the case where $M=*$ and the fiber over $*$ is Euclidean space $\mathbb{R}^n$.  

Combining the two comonads introduces a logical interpretation of non-linear differential operators.  A more general extension to non-linear cases, including the full theory of non-linear partial differential equations, involves considering morphisms of fibered manifolds.  To obtain a closed symmetric monoidal category in this setting requires moving outside the category of fibered manifolds and taking advantage of topos-theoretic and homotopical methods.  We will not consider this extension here.  However, see \citep{KS} for the theory of non-linear partial differential equations in more general synthetic categories.

\section*{Acknowledgements}

The author would like to thank Kazuo Yano and Daniel Murfet for comments.  We also thank the anonymous referees whose comments led to significant improvements to the paper.

\section{Models for intuitionistic differential linear logic}\label{midll}

In this section we recall what it means for a category to be a model for intuitionistic differential linear logic \citep{Eh3}.  To motivate such a definition, we recall some of its basic features.  We emphasise that this is not a complete presentation of the logic.  We merely highlight some properties in order to orient the reader towards the categorical definition at the end of this section.  

The syntax for intuitionistic linear logic involves the connectives $\{ \times, \otimes, !, \multimap \}$ with formulas $A$ generated by expressions of the form
\[  A ::= \desk  \,\,|\,\, 1  \,\,|\,\, A\times B \,\,|\,\, A\otimes B  \,\,|\,\,  A\multimap B \,\,|\,\, !A    \]
where $\desk$ and $1$ are units for $\times$ and $\otimes$ respectively.

Let $\Gamma$ and $\Theta$ be a (possibly empty) sequence of formulas $A_1,\ldots, A_n$.  The connectives satisfy various rules, which can be split between logic rules and structural rules.  The logic rules include, among others, the rules
\begin{mathpar}
   \inferrule*[right=$\times$]{\Gamma, A_1, A_2,\Theta \vdash B}{\Gamma, A_1\times A_2, \Theta\vdash B}
\quad\quad
\inferrule*[right=$\otimes$]{\Gamma, A_1, A_2,\Theta \vdash B}{\Gamma, A_1\otimes A_2, \Theta\vdash B}
\quad\quad
\inferrule*[right=$\multimap$]{\Gamma, A \vdash B}{\Gamma\vdash A\multimap B}
\end{mathpar}
for the additive, multiplicative and implicative connectives.  The structural rules are the exchange rule, identity rule, contraction rule, weakening rule and cut rule.  

Differential linear logic symmetrizes the contraction, weakening and dereliction rules  
\begin{mathpar}
  \inferrule*[right=${c}$]{\Gamma, !A, !A, \Theta\vdash B}{\Gamma, !A,\Theta\vdash B}
\quad\quad\quad
  \inferrule*[right=${w}$]{\Gamma,\Theta\vdash B}{\Gamma, !A, \Theta\vdash B}
\quad\quad\quad
  \inferrule*[right=${d}$]{\Gamma, A,\Theta \vdash B}{\Gamma, !A, \Theta\vdash B}
  \end{mathpar}
for the exponential of linear logic, by adding cocontraction, coweakening and codereliction rules
\begin{mathpar}
  \inferrule*[right=$\bar{c}$]{\Gamma, !A,\Theta\vdash B}{\Gamma, !A, !A, \Theta\vdash B}
\quad\quad\quad
  \inferrule*[right=$\bar{w}$]{\Gamma, !A, \Theta\vdash B}{\Gamma,\Theta\vdash B}
\quad\quad\quad
  \inferrule*[right=$\bar{d}$]{\Gamma, !A, \Theta\vdash B}{\Gamma, A,\Theta\vdash B}
\end{mathpar}
respectively.  This is a very natural thing to do in light of the symmetry inherent in the full classical linear logic \citep{Gi2} of which intuitionistic linear logic is a restriction thereof.  

Given a sequent $\Gamma\vdash B$, a proof of $\Gamma\vdash B$ is a series of sequents, beginning with basic axioms, and following various deduction rules which terminate with the sequent $\Gamma\vdash B$.  Two proofs are said to be \textit{equivalent} if they are equivalent under cut elimination.  

The goal of denotational semantics is to construct a category, a categorical semantics of differential linear logic in our case, faithfully reflecting this structure.  See \citep{Me} for an overview of the subject.  More generally, we should allow multicategories which, like categories, consist of a collection of objects, but allow multimorphisms from a finite sequence of objects to a single target object.  If we denote by $\llbracket A\rrbracket$ the denotation of a formula $A$, then $\llbracket A\rrbracket$ is an object of the multicategory whilst $\llbracket\Gamma\rrbracket\ra\llbracket B\rrbracket$ are multimorphisms for some collection of formulas $\Gamma$ \citep{HP}.  More precisely, the equivalence class of proofs of the sequent $\Gamma\vdash B$ under cut-elimination is assigned to the morphism.

The logical rules for each operation should follow from universal properties in the multicategory.  Working in a multicategory ensures that the connectives, together with their complete coherence data, satisfy a universal property.  For example, the tensor product in a general monoidal category is not universal.  It also illuminates the interpretation of the structural rules categorically.  Therefore, the discussion above should suggest, at minimum, the structure of a multicategory where the additive connective corresponds to the product and the multiplicative connective to tensor product.  

Forgoing some generality, we will work directly in a symmetric monoidal category $\mathsf{C}$ with finite products where the coherence data is contained in explicit proofs.  Note that we deliberately stay close to the notation used for connectives in linear algebra and make no distinction between the connectives in logic and those in the semantic model.  It should be clear from the context if we are referring to a multiplicative or additive product of formulas in logic or of objects in $\mathsf{C}$.

Before defining what we mean by a model for differential linear logic $\mathsf{C}$, we give an informal motivation for some of the various other structure on $\mathsf{C}$ which makes up the definition.  The denotation of implication in differential linear logic will correspond to an internal hom object
\[  \llbracket A\multimap B\rrbracket = \llbracket A\rrbracket\multimap\llbracket B\rrbracket := \uHom(\llbracket A\rrbracket,\llbracket B\rrbracket)  \]
making $\mathsf{C}$ a \textit{closed} symmetric monoidal category with finite products.  The contraction and weakening rules show that exponentiated objects $\llbracket !A\rrbracket$ in our category should satisfy coalgebraic rules, whilst the cocontraction and coweakening rules shows that algebraic rules should be satisfied.  In other words, we enter the realm of bialgebras in the categorical semantics of the exponentiated formulas.  

Finally, if we think of maps $\llbracket A\rrbracket\rightarrow\llbracket B\rrbracket$ as ``linear", then the map
\[  i:\Hom(\llbracket A\rrbracket,\llbracket B\rrbracket)\rightarrow\Hom(\llbracket !A\rrbracket, \llbracket B\rrbracket)   \]
given by composition with the dereliction map $d:\llbracket !A\rrbracket\rightarrow\llbracket A\rrbracket$ should be thought of as the inclusion of linear maps into ``non-linear" maps.  Correspondingly, composition with the codereliction $\bar{d}:\llbracket A\rrbracket\rightarrow\llbracket !A\rrbracket$ induces a map 
\[   D:\Hom(\llbracket !A\rrbracket,\llbracket B\rrbracket)\rightarrow\Hom(\llbracket A\rrbracket, \llbracket B\rrbracket)   \]
which is interpreted as a linearization map, ie. a differential operator.  Then $D[f]$ is the linear ``Jacobian" transformation.  The codereliction should satisfy conditions for $D$ to act like a differential operator, for example, satisfying the chain rule.

We now make this discussion more formal.  In doing so, we drop the bracket notation for the denotation of a formula for the remainder of this section.  The following is just a rewriting of the conditions found in Section~7 of \citep{BCLS} for a symmetric monoidal category with finite products to be a monoidal storage category (also called a (new) Seely category in the literature) \citep{Bi, Me, BCS2}.

\begin{dfn}\label{storagecomonad}
Let $(\mathsf{C},\otimes, 1)$ be a symmetric monoidal category with finite products $(\times,*)$.  A \textit{storage comonad} on $\mathsf{C}$ is a comonad $!=(!,\mu,\epsilon)$ on $\mathsf{C}$, where $\mu:!\ra !!$ is the comultiplication and $\epsilon:!\ra\id_\mathsf{C}$ is the counit, such that~: 
\begin{enumerate}
\item For all $A\in\mathsf{C}$, the object $!A$ is a cocommutative comonoid object in $\mathsf{C}$ with comultiplication $c_A:!A\ra !A\otimes !A$ and counit $e_A:!A\ra 1$ which are both natural transformations.
\item For all $A\in\mathsf{C}$, the map $\mu_A:!A\rightarrow !!A$ is a morphism of comonoid objects in $\mathsf{C}$.
\item For all $A,B\in\mathsf{C}$, the induced maps $e:!(*)\rightarrow 1$ and
\[     \big( !(\pi_1)\otimes!(\pi_2)\big) \circ c_{A\times B}:!(A\times B)\rightarrow !A\otimes !B       \]
are isomorphisms for the projection maps $(\pi_i)_{i\in\{1,2\}}$.
\end{enumerate}
\end{dfn}

The isomorphisms in Condition 3 of Definition~\ref{storagecomonad} are called \textit{Seely isomorphisms}.

\begin{dfn}
A \textit{symmetric monoidal storage category} is a symmetric monoidal category with finite products and a storage comonad.
\end{dfn}

Some models of logic can be endowed with $n$ linear exponential comonads for some $n\in\mathbb{N}$.  To compose comonads, we require extra structure for the composite to remain a comonad.  When $n=2$, we have the following definition from \citep{Bec}.

\begin{dfn}
Let $!_1=(!_1,\mu^1,\epsilon^1)$ and $!_2=(!_2,\mu^2,\epsilon^2)$ be two comonads on a category $\mathsf{C}$.  Then a \textit{distributive law} of $!_1$ over $!_2$ is a natural transformation  
$ \lambda:!_2\circ !_1\ra !_1\circ !_2 $
such that the following diagrams
\[
\begin{tikzcd}
!_2\circ !_1\circ !_1 \arrow[r, "\lambda\circ !_1"]   & !_1\circ !_2\circ !_1  \arrow[r,"!_1\circ\lambda "]   &  !_1\circ !_1\circ !_2\\
!_2\circ !_1  \arrow[u,"!_2\circ \mu^1"]  \arrow[rr,"\lambda"] \arrow[d, "\mu^2\circ !_1"] &     & !_1\circ !_2  \arrow[u,"\mu^1\circ !_2"]\arrow[d,"!_1\circ\mu^2"] \\
!_2\circ !_2\circ !_1  \arrow[r,"!_2\circ\lambda "] &  !_2\circ !_1\circ !_2    \arrow[r,"\lambda\circ !_2"]      &  !_1\circ !_2\circ !_2
\end{tikzcd}
\hspace{1cm}
\begin{tikzcd}
& !_1 &\\
!_2\circ !_1 \arrow[ur,"\epsilon^2\circ !_1"]\arrow[dr,"!_2\circ\epsilon^1"]\arrow[rr,"\lambda"]  &&  !_1\circ !_2 \arrow[ul,near end,"!_1\circ\epsilon^2"]\arrow[dl,"\epsilon^1\circ !_2"] \\
  & !_2      &
\end{tikzcd}
\]
commute in $\mathsf{C}$.
\end{dfn}

Let $!_1$ and $!_2$ be endofunctors on $\mathsf{C}$.  To simplify notation, set $!_{12}=!_2\circ !_1$.  If $!_1$ and $!_2$ are comonads and $\lambda$ a distributive law of $!_1$ over $!_2$, then there exists a unique composite comonad $!_{12}=(!_{12},\mu^{12},\epsilon^{12})$ where 
\[   \mu^{12}:!_{12}\xra{\mu^2\circ\mu^1}!_{22}\circ!_{11}\xra{!_2\circ\lambda\circ!_1} !_{12}\circ !_{12}  \]
is the comultiplication and 
\[   \epsilon^{12}:!_{12}\xra{\epsilon^2\circ\epsilon^1}\id_\mathsf{C} \]
the counit.  Note that the map $\lambda$ is left implicit in the notation for the composite comonad $!_{12}$.  The distributive law $\lambda$ lifts to a morphism of comonads if the diagrams
\[
\begin{tikzcd}
!_{12}  \arrow[d,"\mu^{12}"]\arrow[r,"\lambda"]        &      !_{21}\arrow[d,"\mu^{21}"] \\
!_{12}\circ !_{12} \arrow[r,"\lambda\circ\lambda"]    &   !_{21}\circ !_{21} 
\end{tikzcd}
\hspace{1cm}
\begin{tikzcd}[column sep=1.5em]
!_{12} \arrow[dr,"\epsilon^{12}"]\arrow[rr,"\lambda"]  &&  !_{21} \arrow[dl,near start,"\epsilon^{21}"] \\
  & \id_\mathsf{C}      &
\end{tikzcd}
\]
commute in $\mathsf{C}$.  If the distributive law is an isomorphism, then it is an isomorphism of comonads.

The following gives us the conditions on a category $\mathsf{C}$ for a composite comonad on $\mathsf{C}$ to be a storage comonad.

\begin{lem}\label{dchase}
Let $\mathsf{C}$ be a symmetric monoidal category with finite products.  If $!_1$ is a product preserving comonad, $!_2$ a storage comonad and $\lambda$ a distributive law of $!_1$ over $!_2$ on $\mathsf{C}$, then the composite comonad $!_{12}$ is a storage comonad on $\mathsf{C}$.
\end{lem}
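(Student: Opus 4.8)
I would verify the three conditions of Definition~\ref{storagecomonad} for the composite comonad $!_{12}=!_2\circ!_1$, using the fact that $!_2$ already satisfies them and that $!_1$ preserves products (and hence, being a comonad on a category with finite products, also preserves the terminal object and carries canonical comonoid structure along). The key structural input is that the comonoid structure $(c,e)$ on $!_{12}A=!_2(!_1A)$ should be taken to be the one coming from $!_2$ applied to the object $!_1A$, i.e. $c^{12}_A := c^2_{!_1A}:!_2!_1A\to !_2!_1A\otimes !_2!_1A$ and $e^{12}_A := e^2_{!_1A}:!_2!_1A\to 1$.

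\textbf{Condition 1.} Cocommutativity and coassociativity of $(c^{12}_A,e^{12}_A)$ are immediate since these are just the comonoid axioms for $!_2$ on the object $!_1A$. Naturality in $A$ requires a short argument: a morphism $f:A\to B$ induces $!_{12}f = !_2(!_1 f)$, and naturality of $c^2$ and $e^2$ (with respect to the morphism $!_1 f$) gives exactly the naturality squares for $c^{12}$ and $e^{12}$.

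\textbf{Condition 2.} I must show $\mu^{12}_A:!_{12}A\to !_{12}!_{12}A$ is a comonoid morphism. Here $\mu^{12}$ is the composite $!_2!_1A \xrightarrow{\mu^2_{!_1A}} !_2!_2!_1A \xrightarrow{!_2\lambda_A} !_2!_1!_2!_1A = !_{12}!_{12}A$ (after inserting $!_1\mu^1$ where the definition in the excerpt requires; I would track the precise factorization through $!_{22}!_{11}$). The first leg $\mu^2_{!_1A}$ is a comonoid morphism by Condition~2 for $!_2$. For the second leg, I would need $!_2(\lambda_A)$ to be a comonoid morphism from $!_2(!_2!_1A)$ to $!_2(!_1!_2!_1A)$ — and since $!_2$ sends \emph{any} morphism $g$ to a morphism $!_2 g$ that is automatically a comonoid morphism for the $c^2,e^2$-structures (this is the content of naturality of $c^2,e^2$ together with $!_2$ being a functor: $!_2g$ commutes with $c^2$ and $e^2$), this leg is a comonoid morphism. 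Composing, $\mu^{12}$ is a comonoid morphism. The same reasoning — that $!_2$ applied to anything is a comonoid morphism — is the recurring workhorse. One subtlety is that in the middle we pass through $!_{11}$, so I also need $!_1$-related comonoid compatibilities; but since $!_1$ preserves products, the comonoid structure relevant there is the product-induced (diagonal) one, and $!_1$ applied to a morphism is automatically a morphism of \emph{product}-comonoids, so this goes through as well.

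\textbf{Condition 3.} This is where $!_1$ preserving products enters essentially. For the unit: $!_{12}(*) = !_2(!_1(*))$, and since $!_1$ preserves the terminal object $!_1(*)\cong *$, we get $!_{12}(*)\cong !_2(*)\cong 1$ by the Seely isomorphism for $!_2$. For the binary case: $!_{12}(A\times B)=!_2(!_1(A\times B))\cong !_2(!_1A\times !_1B)$ using that $!_1$ preserves products, and then the Seely isomorphism for $!_2$ gives $!_2(!_1A\times !_1B)\cong !_2!_1A\otimes !_2!_1B = !_{12}A\otimes !_{12}B$. I would then check that the composite isomorphism is the one named in Condition~3, namely $(!_{12}\pi_1\otimes !_{12}\pi_2)\circ c^{12}_{A\times B}$; this amounts to a naturality/compatibility diagram comparing $c^2$ at $!_1(A\times B)$ versus at $!_1A\times!_1B$ along the product-preservation isomorphism of $!_1$, which commutes because $!_1$ carries the projections $\pi_i:A\times B\to A,B$ to $!_1\pi_i$ compatibly with the product-comparison map, and $c^2$ is natural.

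\textbf{Main obstacle.} The genuinely fiddly point is Condition~2: correctly bookkeeping the definition of $\mu^{12}$ as given in the excerpt — which factors through $!_{22}\circ!_{11}$ via $\mu^2\circ\mu^1$ and then $!_2\circ\lambda\circ!_1$ — and checking at each stage that the relevant map is a comonoid morphism for the \emph{correct} comonoid structure (the $!_2$-structure on the outside, the product-structure where $!_1$ acts). The distributive-law axioms are what guarantee these structures are compatible, but none of the three conditions of Definition~\ref{storagecomonad} actually requires the full strength of the distributive law beyond its being used to \emph{define} $!_{12}$ as a comonad; so I expect the proof to be essentially a diagram chase with the slogan ``$!_2$ of anything is a $!_2$-comonoid morphism, $!_1$ of anything is a product-comonoid morphism,'' glued by the product-preservation isomorphisms of $!_1$. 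I do not expect to need any new hypothesis beyond those stated.
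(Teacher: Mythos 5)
Your proposal is correct and follows essentially the same route as the paper's proof: the comonoid structure on $!_{12}A$ is the one induced by the storage comonad $!_2$ on the object $!_1A$, Condition~2 reduces to the fact that $\mu^2$ and $!_2$ of any morphism are comonoid morphisms, and the Seely isomorphisms come from $!_1$ preserving finite products combined with the Seely isomorphisms for $!_2$ (exactly the commutative squares drawn in the paper). Your fuller bookkeeping of $\mu^{12}$ and of the naturality square in Condition~3 just makes explicit what the paper leaves as a routine diagram chase; the aside about product-comonoid structures for $!_1$ is unnecessary but harmless.
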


\begin{proof}
From the conditions in the lemma, let $!_{12}$ be the unique composite comonad.  Consider the storage comonad $!_2$ acting on the object $!_1A$.  Then the object $!_{12}A$ is clearly a cocommutative comonoid object in $\mathsf{C}$ with comultiplication $c^{12}_A:!_{12}A\ra !_{12}A\otimes !_{12}A$ and counit $e^{12}_A:!_{12}A\ra 1$, and $\mu^{12}_A:!_{12}A\rightarrow !_{12}\circ !_{12}A$ a morphism of comonads in $\mathsf{C}$.  The Seely isomorphisms follow from the commutative diagrams
\[
\begin{tikzcd}
!_{12}(A\times B)  \arrow[d,"\sim"]\arrow[r,"c^{12}_{A\times B}"]        &    !_{12}(A\times B)\otimes !_{12}(A\times B) \arrow[d,"!_{12}(\pi_1)\otimes!_{12}(\pi_2)"] \\
!_{2}(!_1A\times !_1B) \arrow[r,"\sim"]    &   !_{12}A\otimes !_{12}B 
\end{tikzcd}
\hspace{1cm}
\begin{tikzcd}[column sep=1.5em]
!_{12}(*) \arrow[dr,"\sim"]\arrow[rr,"e"]  &&  1  \\
  & !_2(*)\arrow[ur,near start,"\sim"]&
\end{tikzcd}
\]
owing to the fact that $!_1$ preserves finite products, and $!_2$ is a storage comonad satisfying the Seely isomorphisms, respectively.
\end{proof}

Since many of the categories arising in applications are locally presentable \citep{Ad}, including our own, it is useful to include here another characterization of a symmetric monoidal storage category.

\begin{prop}\label{lps}
Let $\mathsf{C}$ be a locally presentable strong symmetric monoidal category with finite products.  Then $\mathsf{C}$ is endowed with a storage comonad if and only if there exists a locally presentable cartesian category $\mathsf{D}$ and a colimit preserving symmetric monoidal functor $L:\mathsf{D}\ra\mathsf{C}$ which is bijective on objects.
\end{prop}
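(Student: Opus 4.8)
The plan is to recognise Proposition~\ref{lps} as the locally presentable form of the standard equivalence between storage (new Seely) categories and linear--non-linear adjunctions \citep{Bi,Me,BCS2,BCLS}, the extra ingredient being the adjoint functor theorem, which identifies colimit preserving functors of locally presentable categories with left adjoints \citep{Ad}. I would prove the two implications separately.

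For the ``if'' direction, suppose given a locally presentable cartesian category $\mathsf{D}$ and a colimit preserving strong symmetric monoidal functor $L:\mathsf{D}\to\mathsf{C}$. Since $\mathsf{D}$ and $\mathsf{C}$ are locally presentable and $L$ preserves all colimits, $L$ has a right adjoint $R:\mathsf{C}\to\mathsf{D}$, and by doctrinal adjunction $R$ acquires a canonical lax symmetric monoidal structure making $L\dashv R$ a symmetric monoidal adjunction. I would then set $!:=L\circ R$, with counit the counit of the adjunction and comultiplication $\mu=L\eta R$ obtained from the unit $\eta$, and verify the three clauses of Definition~\ref{storagecomonad}. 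For the comonoid structure: every object $X$ of the cartesian category $\mathsf{D}$ is canonically and naturally a cocommutative comonoid via its diagonal $X\to X\times X$ and its map to the terminal object, and every morphism of $\mathsf{D}$ is automatically a comonoid morphism; applying the strong symmetric monoidal functor $L$ therefore equips $!A=L(RA)$ with a natural cocommutative comonoid structure $(c_A,e_A)$ and makes $\mu_A=L(\eta_{RA})$ a comonoid morphism. For the Seely isomorphisms: $R$ preserves finite products, being a right adjoint, while $L$ sends the finite products of $\mathsf{D}$ to tensor products, so $!(A\times B)\cong L(RA\times RB)\cong LRA\otimes LRB=!A\otimes!B$ and $!(*)\cong L(*)\cong 1$. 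The one remaining step, routine, is a coherence diagram chase identifying these with the canonical maps $\big(!(\pi_1)\otimes!(\pi_2)\big)\circ c_{A\times B}$ and $e$.

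For the ``only if'' direction, given a storage comonad $!$ on $\mathsf{C}$ the natural candidate for $\mathsf{D}$ is a cartesian category assembled from the exponential --- the co-Kleisli category $\mathsf{C}_!$, with objects those of $\mathsf{C}$ and $\mathsf{C}_!(A,B)=\mathsf{C}(!A,B)$, or, in the cofree case, the category of $!$-coalgebras --- together with the canonical functor to $\mathsf{C}$: this functor is strong symmetric monoidal, the Seely isomorphisms being precisely what makes it so, and it recovers $!$, while the identification of object sets coming from the co-Kleisli construction makes it bijective on objects.

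The hard part will be establishing the local presentability of this cartesian category $\mathsf{D}$. It cannot be obtained by transporting colimits across the functor to $\mathsf{C}$, for a storage comonad need not preserve colimits --- already the Seely isomorphism converts a coproduct in the first variable into a tensor product --- so $\mathsf{D}$ is not $\mathsf{C}$ with its colimits retained. The route I would take is to show that the underlying endofunctor $!$ is accessible; by \citep{Ad} the category of $!$-coalgebras is then locally presentable, and it remains to check that the cartesian category relevant to the correspondence is itself locally presentable, for example by exhibiting it as an accessibly embedded, colimit-closed subcategory generated by a set of presentable objects. Accessibility of the exponential modality is verified directly for the jet comonad, the distributional comonad, and their composite, so Proposition~\ref{lps} applies to each of the models constructed in this paper.
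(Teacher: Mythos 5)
Your ``if'' direction is essentially the paper's own argument: the adjoint functor theorem produces $R$, doctrinal adjunction makes $L\dashv R$ a symmetric monoidal adjunction, and the storage structure on $!=L\circ R$ is extracted from the cartesian structure of $\mathsf{D}$ via the strong monoidal $L$ (comonoids from diagonals, Seely maps from $R$ preserving products). The paper simply delegates this last verification to Proposition~25 of \citep{Me} instead of unwinding it, so up to that citation you and the paper coincide, and your sketch of the verification is correct.

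The gap is in the ``only if'' direction. The paper disposes of it by citing Proposition~24 of \citep{Me}, i.e.\ it invokes the known Seely-category/linear--non-linear correspondence with $\mathsf{D}=\mathsf{C}_!$; you identify the same candidate but then, rightly, observe that the statement also demands local presentability of $\mathsf{D}$, colimit preservation of $L$, and bijectivity on objects --- and your proposal does not actually deliver these. The route you sketch cannot close it as stated: accessibility of $!$ gives local presentability of the Eilenberg--Moore category of $!$-coalgebras \citep{Ad}, but $\mathsf{C}_!$ sits inside it as the full subcategory of \emph{cofree} coalgebras, which is not closed under colimits, and Kleisli-type categories of accessible (co)monads on locally presentable categories can fail to be locally presentable altogether (the Kleisli category of the powerset monad is $\mathrm{Rel}$, which is self-dual and hence not locally presentable); so ``exhibit it as an accessibly embedded, colimit-closed subcategory generated by presentable objects'' is precisely the missing argument, not a routine check. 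There is also an object-level slip: the colimit-preserving left adjoint out of $\mathsf{C}_!$ sends $A$ to $!A$, while it is the right adjoint $\mathsf{C}\to\mathsf{C}_!$ that is identity on objects, so your claim that the co-Kleisli construction automatically makes $L$ bijective on objects needs justification (the paper elides this too, recording only afterwards that $\mathsf{D}$ is isomorphic to $\mathsf{C}_!$). Finally, the assertion that accessibility of the jet and distributional comonads ``is verified directly'' is neither carried out nor needed for the proposition as stated.
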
  

\begin{proof}
$(\Leftarrow)$ By the adjoint functor theorem, the functor $L$ admits a right adjoint $R:\mathsf{C}\ra\mathsf{D}$ and since $L$ is strong symmetric monoidal then $R$ is lax symmetric monoidal.  Thus $L\dashv R$ defines a linear-non-linear adjunction in the sense of \citep{Me}.  Since $L$ is moreover bijective on objects, by Proposition~25 of \textit{loc.cit}, the pair $(\mathsf{C},!:=L\circ R)$ define a symmetric monoidal storage category.  $(\Rightarrow)$ This follows from Proposition~24 of \textit{loc.cit.}.
\end{proof}

The category $\mathsf{D}$ in Proposition~\ref{lps} is isomorphic to the Kleisli category $\mathsf{C}_!$.  In the parlance of \citep{Me,Be}, we have a linear-non-linear adjunction which takes the form
\[ \begin{tikzcd}
\mathsf{C}_{!}\arrow[bend left]{r}{L}  
& \mathsf{C} \arrow[bend left]{l}{R}  
\end{tikzcd}  \]
for the comonad $!=L\circ R$ on $\mathsf{C}$. 

We need to introduce one more piece of structure which is key to interpreting a model for differential linear logic as a differential category \citep{BCS}, ie. a structure enabling one to ``differentiate" morphisms.  We will call a symmetric monoidal category $(\mathsf{C},\otimes)$ a \textit{$\CMon$-enriched symmetric monoidal category} if it is enriched over the monoidal category $(\textup{CMon},+)$ of commutative monoids such that the products are compatible in the sense that $(f+g)\otimes h = f\otimes h +  g\otimes h$ and $0\otimes h=0$ for zero morphisms $0$.

\begin{dfn}\label{cmonenriched}
A \textit{$\CMon$-enriched symmetric monoidal storage category}\footnote{These categories are called \textit{additive monoidal storage categories} in \citep{BCS,BCLS}.  We have decided to use the above more descriptive terminology and retain the standard use of additivity \citep{Mac}.} is a symmetric monoidal storage category which is also a $\CMon$-enriched symmetric monoidal category.
\end{dfn}

By Theorem~7.4 of \citep{BCLS}, $\CMon$-enriched symmetric monoidal storage categories have finite biproducts and an additive bialgebra modality.  So in addition to the cocommutative coalgebra $(!A,c_A,e_A)$, we have a commutative monoid object $(!A,\bar{c}_A,\bar{e}_A)$ for all $A\in\mathsf{C}$ with multiplication $\bar{c}_A:!A\otimes !A\ra !A$ and unit $\bar{e}_A:1\ra !A$.  The categorical analogue of the codereliction rule is then the following.

\begin{dfn}\label{codereliction}
Let $(\mathsf{C}, !)$ be a symmetric monoidal storage category which is also a $\CMon$-enriched category.  A natural transformation 
\[ \bar{d}:\id_\mathsf{C}\ra !  \] 
is called a \textit{codereliction} if it satisfies the rules~:
\begin{itemize}
\item (Constant rule) $e_A\circ\bar{d}_A = 0:A\ra 1$.
\item (Linear rule) $\epsilon_A\circ\bar{d}_A=\id_A:A\ra A$.
\item (Product rule) $c_A\circ\bar{d}_A = \bar{d}_A\otimes\bar{e}_A + \bar{e}_A\otimes\bar{d}_A:A\ra !A\otimes !A$.
\item (Chain rule) $\mu_A\circ\bar{c}_A\circ(\bar{d}_A\otimes\id_{!A}) = \bar{c}_{!A}\circ(\bar{d}_{!A}\otimes\mu_{A})\circ (\bar{c}_A\otimes\id_{!A})\circ(\bar{d}_A\otimes c_A):A\otimes !A\ra !!A$.
\end{itemize}
\end{dfn}

We now state the main overarching definition of this paper.

\begin{dfn}
A \textit{model for intuitionistic differential linear logic} is a $\CMon$-enriched symmetric monoidal storage category with a codereliction which is also a closed symmetric monoidal category.
\end{dfn}

One often finds the notion of a \textit{deriving transformation} \citep{Eh1,BCS} in place of a codereliction map in studies of differential categories.  Every codereliction induces a deriving transformation.  Furthermore, these two structures are equivalent on a $\CMon$-enriched symmetric monoidal storage category by combining Theorem~6 and Theorem~3 of \citep{BCLS}.  The deriving transformation associated to the codereliction $\bar{d}_A$ is given by the composition
\[  \overline\partial_A:!A\otimes A\xra{\id_{!A}\otimes\bar{d}_A}!A\otimes !A\xra{\bar{c}_A} !A  \]
in $\mathsf{C}$.  Then for any morphism $f:!A\ra B$ in $\mathsf{C}$, the composite map
\[   \tu{d}f:=f\circ\overline\partial_A:!A\otimes A\ra B  \]
will represent the \textit{derivative} of $f$ in $\mathsf{C}$.  

We define the $n$-fold derivative by induction~: we set $\overline\partial^0_A=\id_{!A}$ and
\[ \overline\partial^{n+1}_A:=\overline\partial_A\circ(\overline\partial^n_A\otimes\id_A):!A\otimes A^{\otimes n}\ra !A  \]
and define 
\[    \tu{d}^nf:=f\circ\overline\partial^n_A:!A\otimes A^{\otimes n}\ra B  \]
in $\mathsf{C}$.  The notation for intuitionistic implication $A\Rightarrow B:=!A\multimap B$ is now revealing since, by adjunction, the linear differential operator $\tu{d}^n$ is given by
\[    \tu{d}^n:(A\Rightarrow B)\ra (A^{\otimes n}\multimap(A\Rightarrow B))  \]
and should be thought of as sending a ``non-linear" morphism $f$ to a ``multi-linear" morphism $\tu{d}^nf:A^{\otimes n}\ra(A\Rightarrow B)$.  The basic example is the following.

\begin{ex}\label{diffex}
Given a smooth function $f:\mathbb{R}^n\ra\mathbb{R}^m$ in the category of vector spaces over $\mathbb{R}$, we have the linear morphism
\[ \tu{d}f:=f\circ\overline{\partial}_{\mathbb{R}^n}:\mathbb{R}^n\ra (\mathbb{R}^n\Rightarrow\mathbb{R}^m)  \]
given by $\tu{d}f(x)(y)=\tu{d}_xf(y)=(J_xf)y$ where $J_xf$ is the Jacobian of $f$ at $x$, ie. $\tu{d}f(x)(y)$ is the derivative of $f$ at $x$ in the direction $y$.  This satisfies the chain rule
\[ \tu{d}_x(g\circ f) = \tu{d}_{f(x)}(g)\circ\tu{d}_x(f) \]
in addition to other basic properties of differentiation contained in the definition of a codereliction.  
\end{ex}

In the following, Example~\ref{diffex} will be generalized to the case where $f$ is a section of an arbitrary vector bundle over a smooth manifold $M$ from the point of view of differential linear logic.  When $M$ is a point $*$ and the fibers are all of the form $\mathbb{R}^n$ for some $n\in\mathbb{N}$, we recover the simple example above.

\section{Vector bundles and the jet comonad}\label{jetcom}

We will henceforth work over the field $\mathbb{R}$ of real numbers and fix a smooth $n$-dimensional manifold $M$ over $\mathbb{R}$ for the remainder of the article.  

We consider the geometric approach to the theory of partial differential equations which begins with the study of jet bundles \citep{Sa}.  The $r$-jet of a function $f:M\ra\mathbb{R}$ at a point $x$ of $M$ can be thought of as the coordinate-free Taylor polynomial 
\[  (j^r_xf)(z):f(x) + f'(x)z +\ldots + \frac{1}{r!}f^{(r)}(x)z^r   \]
for a formal variable $z$.  We go beyond functions $f$, interpreted as sections of the trivial vector bundle $M\times\mathbb{R}\ra M$, and consider local sections of a general vector bundle in this paper.  The jet bundle associated to a vector bundle is itself a vector bundle whose coordinates represent the derivatives of the fiber coordinates.  

More precisely, let $\textup{VBun}(M)$ denote the category of finite rank vector bundles and $\pi:E\ra M$ an object in $\textup{VBun}(M)$.  To the vector bundle $E$, we associate its vector bundle $\pi_r:J^r(E)\ra M$ of $r$-jets of local sections.  For a local section $s$ at $x\in M$, its $r$-jet is denoted $j^r_x(s)$.  Two sections are in the equivalence class $j^r_x(s)$ if they have the same $r$th order Taylor expansion at $x$. 

Consider the sheaf $\s{C}^\infty_M$ of smooth functions on $M$ and its category $\Mod(\s{C}^\infty_M)$ of modules.  Then the functor $\textup{VBun}(M)\rightarrow\Mod(\s{C}^\infty_M)$ sending a vector bundle $E$ to its sheaf of sections $\s{E}:=\Gamma(E)$ is fully faithful with essential image the category $\cV(M)$ of locally free sheaves of finite rank.  We also refer to objects in this equivalent category as vector bundles on $M$.

The category $\cV(M)$ is a symmetric monoidal category in two ways.  Firstly, via the direct sum $\s{E}\oplus\s{E}'$ of sheaves, and secondly, via the tensor product $\s{E}\otimes_{\s{C}_M^\infty}\s{E}'$ of sheaves \citep{JP}.  We have canonical isomorphisms
\[  \s{E}\otimes_{\s{C}_M^\infty}\bigoplus_i\s{E}'_i\simeq\bigoplus_i(\s{E}\otimes_{\s{C}_M^\infty}\s{E}'_i)             \]
showing that the tensor product distributes over coproducts.  If we denote by $\uHom(\s{E},\s{E}')$ the sheaf of morphisms between $\s{E}$ and $\s{E}'$ which sends $U$ to $\Hom_{\s{C}^\infty_M|_U}(\s{E}|_U,\s{E}'|_U)$, then there exists an isomorphism 
\[  \Hom_{\s{C}^\infty_M}(\s{F}\otimes_{\s{C}^\infty_M}\s{E},\s{E}')\simeq\Hom_{\s{C}^\infty_M}(\s{F},\uHom(\s{E},\s{E}'))  \]
which makes the category of vector bundles $(\cV(M),\otimes,\sC^\infty_M)$ a closed additive symmetric monoidal category for the tensor product.  We will also be concerned with the cocartesian monoidal structure $(\cV(M),\oplus,0)$ where $0$ is the constant sheaf with value $\{0\}$ which is a zero object of $\cV(M)$.  We have an isomorphism
\[  \s{E}\oplus\s{E}'\simeq\s{E}\times\s{E}'  \]
of sheaves. 

Let $\s{J}^r(\sE)$ denote the sheaf of sections of $J^r(E)$ on $M$.  Given a section $s$ of $\pi_r|_U$, the $r$-jet prolongation of $s$ is the smooth section
\[  j^r(s):U\ra J^r(E)  \]
of $\pi_r$ such that $j^r(s)(x)=j_x^r(s)$ for all $x$ in $U\subseteq M$.  Then $j^r:\sE\ra\sJ^r(\sE)$ is a morphism of sheaves of sets.  Consider the endofunctor
$  !_{j^r}:\cV(M)\ra\cV(M)  $
sending $\sE$ to $\sJ^r(\sE)$ and a morphism $f:\sE\ra\sE'$ to its $r$-jet prolongation $\sJ^r(f):\sJ^r(\sE)\ra\sJ^r(\sE')$ which elementwise sends $j^r(s)$ to $j^r(f\circ s)$.  We will often make the abuse of writing $s\in\sE$ for a local section in $\sE|_U$.

Recall that a category $I$ is said to be \textit{cofiltered} \citep{Gr2} if it is non-empty; for any pair of objects $i$ and $j$ in $I$, there exists an object $k$ together with morphisms $k\ra i$ and $k\ra j$; and for every pair of morphisms $f$ and $g$ with the same source and target, there exists a morphism $h$ such that $f\circ h=g\circ h$.  A cofiltered diagram in a category $\mathsf{C}$ is a functor $X:I\ra\mathsf{C}$ indexed by a cofiltered category.  The category $\tu{Pro}(\mathsf{C})$ of \textit{pro-objects} in $\mathsf{C}$ has cofiltered diagrams in $\mathsf{C}$ as objects, and for two objects $X:I\ra\mathsf{C}$ and $Y:I'\ra\mathsf{C}$, morphisms defined by $\Hom_{\tu{Pro}(\mathsf{C})}(X,Y):=\lim_{i'\in I'}\colim_{i\in I}\Hom(X_i,Y_{i'})$.

Let $\sE$ be a vector bundle on $M$.  We denote by 
\[  \sJ(\sE)=\underset{r\in\mathbb{N}}{``\lim"} (\sJ^r(\sE)) \]
the pro-object
\[  \cdots\ra\sJ^{r+1}(\sE)\xra{\pi_{r+1,r}}\sJ^r(\sE)\ra\cdots\ra\sJ^1(\sE)\xra{\pi_{1,0}}\sJ^0(\sE)=\sE  \]
in the category $\cV(M)$ of vector bundles on $M$.  Here $\pi_{r+1,r}:\sJ^{r+1}(\sE)\ra\sJ^r(\sE)$ is the canonical projection.  

Given a pro-vector bundle $\sE:I\ra\cV(M)$, the infinite jet bundle of $\sE$ is the pro-object $\sJ(\sE):\mathbb{N}^{op}\times I\ra\cV(M)$ given by $``\lim"_{r,i}(\sJ^r(\sE_i))$ in $\cV(M)$.  Then the infinite jet prolongation $j:\s{E}\ra\sJ(\sE)$ lifts to a morphism of pro-sheaves.  We have an induced endofunctor $!_j:\tu{Pro}(\cV(M))\ra\tu{Pro}(\cV(M))$ sending $\sE$ to $\sJ(\sE)$ induced from infinite prolongation.  

The dual of a pro-object in the category $\cV(M)$ of vector bundles is an \textit{ind-object}.  The category of ind-objects in $\cV(M)$ will be denoted $\Ind(\cV(M))=\Pro(\cV(M)^{op})^{op}$.  If $\sE$ is a vector bundle, then the dual $\sJ(\sE)^\bot:=\uHom_{\sC^\infty_M}(\sJ(\sE),\sC^\infty_M)$ is an ind-object in $\cV(M)$.  Here $\uHom_{\sC^\infty_M}$ denotes the sheaf of continuous linear maps.  When $\sE$ is an ind-object in $\cV(M)$, then $\sJ(\sE)$ is a pro-ind-object in $\cV(M)$.

\begin{dfn}
A \textit{pro-ind vector bundle} is an object in the category $\tu{PI}(M):=\Pro(\Ind(\cV(M)))$ of pro-ind-objects in $\cV(M)$.
\end{dfn}

We will often identify a vector bundle with its image under the fully faithful map
$  i:\cV(M)\ra\tu{PI}(M) $
where $\cV(M)\simeq\Pro(\cV(M))\cap\Ind(\cV(M))\subset\tu{PI}(M)$ is an equivalence of categories.  

\begin{rmk}
Pro and Ind objects are often used in practice as presentations of infinite dimensional objects \citep{Gr1,Gr2}.  For example, the category of vector spaces $\Vect$ is equivalent to the category $\Ind(\Vect^{\tu{fin}})$ of ind-objects in the category $\Vect^{\tu{fin}}$ of finite dimensional vector spaces.  Alternatively, the category $\Pro(\Aff_S^{\,\tu{fin}})$ of pro-objects in the category $\Aff_S^{\,\tu{fin}}$ of affine schemes of finite type over a quasi-separated base scheme $S$ is equivalent to the category $\Aff_S$ of all affine schemes.  
\end{rmk}

We have an endofunctor
\[   !_j:\tu{PI}(M)\ra\tu{PI}(M)  \]
on the category $\tu{PI}(M)$ given by infinite prolongation.

\begin{lem}\label{comonad}
The endomorphism $!_j$ is a comonad on the category $\tu{PI}(M)$.
\end{lem}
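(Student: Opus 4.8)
The plan is to produce the counit $\epsilon:!_j\Rightarrow\mathrm{id}$ and the comultiplication $\mu:!_j\Rightarrow!_j\circ!_j$ explicitly at the level of finite jets, and then verify the three comonad identities (counitality on both sides, coassociativity) by passing to the cofiltered (and then ind-) limits. First I would recall that for a vector bundle $\sE$ and integers $r\ge 0$ there is the projection $\pi_{r,0}:\sJ^r(\sE)\to\sJ^0(\sE)=\sE$; taking the limit over $r$ these assemble into a morphism of pro-objects $\epsilon_\sE:!_j\sE=\sJ(\sE)\to\sE$. For the comultiplication, the key classical fact is that there is a canonical morphism of vector bundles $\sJ^{r+s}(\sE)\to\sJ^{s}(\sJ^{r}(\sE))$, the ``iterated jet'' comparison, sending $j^{r+s}_x(t)$ to $j^s_x(j^r(t))$; this is well defined because the $s$-jet of the prolongation $j^r(t)$ at $x$ depends only on the $(r+s)$-jet of $t$ at $x$. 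These maps are compatible with the projections $\pi_{\bullet,\bullet}$ in both indices, so they induce a morphism of pro-objects (and, after applying the pro-ind machinery, of pro-ind objects)
\[ \mu_\sE:\sJ(\sE)=\underset{k}{``\lim"}\,\sJ^{k}(\sE)\longrightarrow \underset{r,s}{``\lim"}\,\sJ^{s}(\sJ^{r}(\sE))=\sJ(\sJ(\sE))=!_j!_j\sE, \]
reindexing the cofiltered limit $\mathbb{N}^{op}$ against $\mathbb{N}^{op}\times\mathbb{N}^{op}$ via $k\mapsto(k,k)$ (cofinality).

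Next I would check naturality of $\epsilon$ and $\mu$: this is immediate at each finite level because $\sJ^r$ is a functor and both the projections $\pi_{r,0}$ and the iterated-jet comparisons are natural in $\sE$ (they are defined by universal Taylor-expansion data, which commutes with prolongation of any morphism $f:\sE\to\sE'$). Naturality then lifts to $\mathrm{Pro}(\cV(M))$ and to $\mathrm{PI}(M)$ by functoriality of the $\mathrm{Pro}$ and $\mathrm{Ind}$ constructions. The comonad axioms
\[ (\epsilon\circ!_j)\circ\mu=\mathrm{id}=(!_j\circ\epsilon)\circ\mu,\qquad (\mu\circ!_j)\circ\mu=(!_j\circ\mu)\circ\mu \]
are then reduced, by the definition of morphisms in $\mathrm{Pro}$ and $\mathrm{Ind}$ as limits/colimits of hom-sets, to identities among the finite-level comparison maps $\sJ^{r+s}(\sE)\to\sJ^s(\sJ^r(\sE))$ and the projections. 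Counitality says that composing the iterated-jet map with the projection to the $0$-jet in one of the two factors recovers the identity projection $\sJ^{r+s}(\sE)\to\sJ^{\max}(\sE)$ up to the reindexing, which is a direct unwinding of ``the jet of a jet at order $0$ is the jet''. Coassociativity is the statement that the two ways of building $\sJ^{r+s+t}(\sE)\to\sJ^t(\sJ^s(\sJ^r(\sE)))$ — via $j^{s+t}(j^r(-))$ then refine, or via $j^t(j^{r+s}(-))$ then refine — agree; both send $j^{r+s+t}_x(u)$ to $j^t_x\!\big(j^s(j^r(u))\big)$, so they are equal as bundle maps. Each of these is a routine but slightly bookkeeping-heavy verification in local jet coordinates (multi-indices), which I would state rather than grind through.

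The main obstacle, and the only genuinely non-formal point, is making sure the iterated-jet comparison maps are honest \emph{morphisms in $\cV(M)$} (i.e. $\sC^\infty_M$-linear maps of locally free sheaves) and that the reindexing of cofiltered limits is legitimate, so that everything really lives in $\mathrm{Pro}(\cV(M))$ and descends to $\mathrm{PI}(M)=\mathrm{Pro}(\mathrm{Ind}(\cV(M)))$ without set-theoretic or cofinality pathologies; one must check the diagram
\[ \begin{tikzcd}[column sep=1.5em] \sJ^{r+s}(\sE)\arrow[r]\arrow[d,"\pi_{r+s,\,r'+s}"'] & \sJ^{s}(\sJ^{r}(\sE))\arrow[d,"\sJ^{s}(\pi_{r,r'})"] \\ \sJ^{r'+s}(\sE)\arrow[r] & \sJ^{s}(\sJ^{r'}(\sE)) \end{tikzcd} \]
and its analogue in the $s$-variable commute, which is again a coordinate computation. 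Once the finite-level data is in place, the rest is the formal observation (as the excerpt already uses elsewhere) that $\mathrm{Pro}$ and $\mathrm{Ind}$ of a category inherit any such diagrammatic structure levelwise. I expect the write-up to be short: define $\epsilon$ and $\mu$ levelwise, note naturality, and invoke that the comonad identities hold because they hold for the defining finite-jet comparison maps.
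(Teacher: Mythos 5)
Your proposal is correct and follows essentially the same route as the paper: the paper defines $\epsilon^j_\sE$ by $j_x(s)\mapsto s(x)$ and $\mu^j_\sE$ by $j_x(s)\mapsto j_x(j(s))$ (the limit of the finite-level comparisons $\mu^{r,q}_\sE:\sJ^{r+q}(\sE)\to\sJ^q\sJ^r(\sE)$ that appear later in the text) and asserts the comonad diagrams are easily verified. Your write-up simply makes explicit the finite-jet comparison maps, the cofinality/reindexing of the double pro-limit, and the levelwise verification that the paper leaves implicit.
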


\begin{proof}
We have a natural comultiplication map $\mu^j:!_j\ra !_j !_j$ which object-wise $\mu^j_\sE:!_j\sE\ra !_j!_j\sE$ sends $j_x(s)$ to $j_x(j(s))$ and a natural counit map $\epsilon^j:!_j\ra\id$ which object-wise $\epsilon^j_\sE:!_j\sE\ra\sE$ sends $j_x(s)$ to $s(x)$.  The commutativity of the relevant diagrams can be easily verified.
\end{proof}

\begin{rmk}
Currently $!_j\sE$ and its continuous linear dual $!_j\sE^\bot$ are defined as formal filtered limits and colimits.  This will be remedied in Section~\ref{csdc} by introducing functional analytic tools.
\end{rmk}

\begin{rmk}
The observation that the infinite jet functor defines a comonad in the smooth setting goes back to \citep{Mar}.  A far reaching generalization, encompassing many examples, is contained in \citep{KS}.  
\end{rmk}

Let $\tu{PI}(M)_{!_j}$ denote the Kleisli category of the comonad $!_j$.  We have a linear-non-linear adjunction
\[ \begin{tikzcd}
\tu{PI}(M)_{!_j}\arrow[bend left]{r}{X}  
& \tu{PI}(M) \arrow[bend left]{l}{U}  
\end{tikzcd}  \]
where $!_j=X\circ U$.  The left adjoint $X$ sends a pro-ind-vector bundle $\s{E}$ to $!_j\s{E}$ and a morphism $F:!_j\s{E}\ra\s{E}'$ to $!_j(F)\circ\mu^j_\s{E}:!_j\sE\ra !_j\sE'$.  The right adjoint is an identity on objects and sends a morphism $G:\s{E}\ra\s{E}'$ to $G\circ\epsilon^j_\s{E} = \epsilon^j_{\sE'}\circ !_j(G):!_j\sE\ra\sE'$ as a morphism in $\tu{PI}(M)$.  The unit of the adjunction on $\s{E}$ is simply the morphism $\eta_{\s{E}}=\id_{!_j\sE}:\sE\ra !_j\sE$, and the counit is given by $\epsilon^j_\sE$.  

We now give several interpretations of $\tu{PI}(M)_{!_j}$ which includes the theory of linear differential operators, $D$-modules, $!_j$-coalgebras and linear partial differential equations.  Let $\sE$ and $\sE'$ be vector bundles on $M$ and $\tu{Diff}^r(\sE,\sE')$  the sheaf of linear partial differential operators.  It sends $U\subseteq M$ to the $\s{C}^\infty_M(U)$-module whose elements are morphisms $P_U:\sE(U)\ra\sE'(U)$ given by $\sum_{|\alpha|\leq r}a_\alpha\circ\partial_{\alpha}$ for any trivialization where $a_\alpha\in\Hom_{\s{C}^\infty_M(U)}(\sE(U),\sE'(U))$.  The functor 
\[  \tu{Diff}^r(\sE,-):\cV(M)\ra\Set  \] 
is representable by the vector bundle $\sJ^r(\sE)$.  The isomorphism 
\[  \Hom_{\cV(M)}(\sJ^r(\sE),\sE')\simeq\tu{Diff}^r(\sE,\sE')  \] 
is given by the map $F\mapsto \widehat{F}:=F\circ j^r$.  The ind-object $\tu{Diff}(\sE,\sE'):=``\colim"_{r\in\mathbb{N}}\tu{Diff}^r(\sE,\sE')$ given by the natural inclusions induce an isomorphism
\[    
\tu{Diff}(\sE,\sE')\simeq\underset{r\in\mathbb{N}}{``\colim"}\Hom_{\cV(M)}(\sJ^r(\sE),\sE')\simeq\Hom_{\tu{PI}(M)}(\underset{r\in\mathbb{N}}{``\lim"}\sJ^r(\sE),\sE')   
\]
and so $\tu{Diff}(\sE,\sE')$ is represented by $\sJ(\sE)$ in $\tu{PI}(M)$.  The result of this discussion is that we can identify the image of the functor $U:\cV(M)\subset\tu{PI}(M)\ra\tu{PI}(M)_{!_j}$ in the Kleisli category of $!_j$ with the category of vector bundles on $M$ with linear partial differential operators as morphisms.  

Let $\widehat{F}:\sE\ra\sE'$ be a $r$th order differential operator.  We associate to $\widehat{F}$ its corresponding bundle map $F:\sJ(\sE)\ra\sE'$ and vice-versa.  Given a $q$th order differential operator $\widehat{G}$ between $\sE'$ and $\sE''$, composition with $\widehat{F}$ is given by
\[  G\circ F:\sJ^{r+q}(\sE)\xra{\mu^{r,q}_{\sE}}\sJ^q\sJ^r(\sE)\xra{\sJ^q\circ F}\sJ^q(\sE')\xra{G}\sE''  \]
where $\mu^{r,q}_{\sE}$ is the injection sending $j_x^{r+q}(s)$ to $j_x^qj^r(s)$ (and so $\mu=\mu^{\infty,\infty}$).  When they are both of infinite order, we obtain a linear differential operator $\widehat{G}\circ \widehat{F}:\sE\ra\sE''$ and Kleisli composition is well defined.

The Kleisli category of the jet comonad has a natural interpretation in the language of $D$-modules \citep{Ka}.  This extension is as follows.  Let $\mathscr{D}_M(\sE,\sE)$ denote the sheaf of linear differential operators on $M$ and 
\[ \s{D}_M^\infty:=\tu{Diff}(\sC^\infty_M,\sC^\infty_M) \] 
the sheaf of linear differential operators between the sheaf of smooth functions.  This is a sheaf of non-commutative $\sC_M^\infty$-algebras with product given by composition.  We denote the symmetric monoidal category of $D$-modules by 
\[  \Mod(\s{D}_M^\infty):=\Mod_{\s{D}_M^\infty}(\Mod(\sC^\infty_M))  \]
where the symmetric monoidal structure is given by tensoring over $\sC^\infty_M$.  If $\sE$ is a vector bundle, then endowing $\sE$ with a $D$-module structure is equivalent to the choice of flat connection
\[    \nabla:\sE\ra\Omega^1_M\otimes_{\sC^\infty_M}\sE  \]
on $\sE$ which characterizes $D$-modules with an underlying locally free $\sC^\infty_M$-module.  

The sheaf $\sJ(\sE)$ is endowed with a canonical $D$-module structure, the flat connection given by defining a section $\xi$ in $\sJ(\sE)$ to be flat if $\xi=j(s)$ for some $s\in\s{E}$, ie. horizontal sections of the connection are infinite prolongations of sections of $\sE$.  This is also called the Cartan connection.  Explicitly, after choosing coordinates $x_1,\ldots, x_n$ on $U\subseteq M$ and a trivialization $U\times E_0$ of $E$, we have
\[   \sJ(\sE)(U)=\sC^\infty_M(U)\otimes_{\mathbb{R}}\mathbb{R}[[x_1,\ldots,x_n]]\otimes_{\mathbb{R}}E_0  \]
and the flat connection is given by
$ \nabla(f\otimes g\otimes v)=\tu{d}f\otimes g\otimes v +\sum_i f\tu{d}x_i\otimes\frac{\partial}{\partial x_i}g\otimes v$.  Alternatively, it is defined through the Cartan distribution of tangent planes to sections of the form $j(s)$.  This is the map $\mu^{\infty,1}_\sE:\sJ(\sE)\ra\sJ^1\sJ(\sE)$ which is spanned by vector fields of the form 
\[  D_i=\frac{\partial}{\partial x_i}+\sum_{k,I}u^k_{Ii}\frac{\partial}{\partial u^k_I}  \] 
for fiber coordinates $u^k$ and a multi-index $I$.  Finally, there exists a bijection
\[  \Hom_{\sD^\infty_M}(\sJ(\sE),\sJ(\sE'))\simeq \tu{Diff}(\sE,\sE')  \]
which induces a fully faithful functor 
$   \s{J}:\tu{PI}(M)_{!_j}\ra\Pro(\Ind(\Mod(\s{D}_M^\infty)))  $
sending $\s{E}$ to $\sJ(\sE)$.

\begin{rmk}
For the multicategory interpretation of the Kleisli category one takes the multicategory of vector bundles and polydifferential operators 
\[  \tu{PolyDiff}(\sE_1\otimes\ldots\otimes\sE_n,\sE'):=\tu{Diff}(\sE_1,\sC^\infty_M)\otimes_{\sC^\infty_M}\ldots\otimes_{\sC^\infty_M}\tu{Diff}(\sE_n,\sC^\infty_M)\otimes_{\sC^\infty_M}\sE'     \]
where the action of $\sC^\infty_M$ on $\tu{Diff}(\sE_i,\sC^\infty_M)$ is given by left multiplication $(f D)(s):=f(Ds)$ for $s\in\sE_i$.  There exists a bijection
\[    \Hom_{\sD_M^\infty}(\sJ(\sE_1)\otimes_{\sC^\infty_M}\ldots\otimes_{\sC^\infty_M}\sJ(\sE_n),\sJ(\sE'))\simeq\tu{PolyDiff}(\sE_1\otimes\ldots\otimes\sE_n,\sE')  \]
where the left hand side denotes morphisms which are continuous.  
\end{rmk}

Another interpretation of $\tu{PI}(M)_{!_j}$ is as a full subcategory of the Eilenberg-Moore category of $!_j$-coalgebras.  A $!_j$-coalgebra for the comonad $!_j$ is a pair $(\sE,\nu_\sE)$ where $\sE$ is a pro-ind vector bundle and $\nu_\sE:\sE\ra!_j\sE$ is a morphism of pro-ind vector bundles such that $\epsilon_\sE\circ\nu_\sE=\id_\sE$ and $\mu_\sE\circ\nu_\sE=!_j(\nu_\sE)\circ\nu_\sE$.  A morphism between $!_j$-coalgebras $(\sE,\nu_\sE)$ and $(\sE',\nu_\sE')$ is a  morphism $f:\sE\ra\sE'$ of pro-ind vector bundles such that $!_j(f)\circ\nu_\sE=\nu_\sE'\circ f$.  The category of $!_j$-coalgebras, often called the Eilenberg-Moore category, will be denoted $\tu{PI}(M)^{!_j}$.

The Eilenberg-Moore category of $!_j$ is equivalent to a certain category of partial differential equations introduced in \citep{Vi} (see \citep{Mar}).  We first recall some geometric definitions \citep{Po}.

\begin{dfn}
Let $\pi:E\ra M$ be a vector bundle.  A $r$th order \textit{partial differential equation} on $E$ is a fibered submanifold of $\pi_r:J^r(E)\ra M$.  An \textit{inhomogeneous linear partial differential equation} is an affine subbundle of $\pi_r$.  A \textit{homogeneous linear partial differential equation} is a vector subbundle of $\pi_r$.  
\end{dfn}

Let $H^r$ be a $r$th order linear partial differential equation.  In the homogenous case, there exists a vector bundle $E'=\coker(H^r)$ on $M$ and a morphism of vector bundles $f:J^r(E)\ra E'$ such that $H^r=\ker(f)$.  This corresponds to the standard interpretation $f(x_i,u^\alpha,u^\alpha_I)=0$ where $u^\alpha$ are coordinates in the fiber of $E$.  A linear partial differential equation will be henceforth considered homogenous unless otherwise specified.  A  (local) solution of a $r$th order partial differential equation $H^r$ is a section $s$ of $\pi_r|_U$ such that $j^rs(x)\in H^r$ for all $x\in U$.  

The $q$th order prolongation of $h:H^r\subseteq J^r(E)$ is the pullback
\[
\begin{tikzcd}
H^{r,q}\arrow[r," "]  \arrow[d," "]  & J^{r+q}(E) \arrow[d,"\mu^{r,q}_E"] \\
J^q(H^r)\arrow[r,"J^q(h)"]   & J^q J^r(E)
\end{tikzcd} 
\]
in the category of vector bundles.  The infinite prolongation $H\subseteq J(E)$ of $H^r$ is the pro-object
\[   \cdots\ra {H}^{r,k+1}\xra{\pi^r_{k+1,k}}{H}^{r,k}\ra\cdots\ra {H}^{r,1}\xra{\pi^r_{0,1}}{H}^{r,0}={H}^r\subseteq J^r(E)  \]
in the category of vector bundles.  It can be interpreted as $H^r$ together with its system of total derivatives.  A morphism between infinitely prolongated linear equations is a morphism of pro-vector bundles.  

These constructions are clearly extended to the case where $E$ itself is a pro-ind-vector bundle.  We obtain a category $\tu{LPDE}(M)$ of infinitely prolongated linear partial differential equations.  

\begin{rmk}\label{symmetries}
In this geometric formulation of partial differential equations, infinitesimal symmetries are given by tangent vector fields on the jet bundle whose flows preserve this submanifold \citep{Po,Ol}.
\end{rmk}

The sheaf interpretation of this result is as follows.  The vector bundle $H^r$ induces a sheaf $\s{H}^r$ of solutions and $H^{r,q}$ a prolongated sheaf $\s{H}^{r,q}\subset\sJ^{q+r}(\sE)$ of solutions.  The infinite prolongation $\s{H}\subseteq\sJ(\sE)$ is a pro-object in the category of vector bundles $\cV(M)$ over $M$.  If $\sE$ is a pro-ind vector bundle, then the same is so for $\s{H}$.  There is an equivalence $\s{H}\simeq\s{H}^r$ of sheaves, ie. a section of $\sE$ is a solution of $H^r$ if and only if it is a solution of the prolonged equation $H$.

We call the map $h:\s{H}\ra !_j\sE$ simply the sheaf of solutions.  Given two sheaves of solutions $h:\s{H}\ra !_j\sE$ and $h':\s{H}'\ra !_j\sE'$, a morphism $\gamma:h\ra h'$ is a commutative diagram
\[
\begin{tikzcd}
\s{H}\arrow[r,"h"]  \arrow[d," "]  & !_j\sE \arrow[d," "] \\
\s{H}'\arrow[r,"h'"]   & !_j\sE'
\end{tikzcd} 
\]
in $\tu{PI}(M)$.  We denote by $\tu{Soln}(M)$ the category of sheaves of solutions and morphisms between them.  

\begin{prop}\label{solneq}
There exists a chain of equivalences
$   \tu{LPDE}(M)\simeq\tu{Soln}(M)\simeq\tu{PI}(M)^{!_j}    $
of categories.
\end{prop}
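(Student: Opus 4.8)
The plan is to establish the two equivalences separately; the first is essentially a change of presentation while the second carries the real content. For $\tu{LPDE}(M)\simeq\tu{Soln}(M)$ I would use the equivalence $\textup{VBun}(M)\simeq\cV(M)$ of Section~\ref{jetcom}, extended levelwise to $\Pro(\Ind(\cV(M)))$: it sends a finite-order linear equation $H^r\subseteq J^r(E)$ to a subsheaf $\s{H}^r\subseteq\sJ^r(\sE)$ and carries the pullback squares defining the prolongations $H^{r,q}$ to the analogous squares of sheaves, so that on passing to the cofiltered limit the infinitely prolonged equation $H\subseteq J(E)$ is identified with the sheaf of solutions $h:\s{H}\ra!_j\sE$. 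Morphisms of pro-vector bundles and the commuting squares between infinitely prolonged equations go over to the corresponding sheaf-theoretic data and back, which gives full faithfulness, and essential surjectivity holds because, by definition, every object of $\tu{Soln}(M)$ is built from such an infinitely prolonged equation. The only point needing care is that full faithfulness and essential surjectivity survive the passage to pro- and ind-completions, which is standard.

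For $\tu{Soln}(M)\simeq\tu{PI}(M)^{!_j}$ I would build functors in both directions. Given a sheaf of solutions $h:\s{H}\ra!_j\sE$, I would equip $\s{H}$ with a coalgebra structure by restricting the comultiplication along $h$: the composite $\mu^j_\sE\circ h:\s{H}\ra!_j!_j\sE$ factors through $!_j(h):!_j\s{H}\ra!_j!_j\sE$ — this factorization being precisely the statement that $\s{H}$ is stable under total derivatives, i.e. the pullback condition defining the prolongations $\s{H}^{r,q}$ — and the resulting map $\nu_{\s{H}}:\s{H}\ra!_j\s{H}$ inherits the counit and coassociativity axioms from $\epsilon^j$ and $\mu^j$ because $!_j(h)$ is monic. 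Conversely, given a coalgebra $(\s{F},\nu_{\s{F}})$, the counit axiom $\epsilon^j_{\s{F}}\circ\nu_{\s{F}}=\id_{\s{F}}$ exhibits $\nu_{\s{F}}$ as a split monomorphism, and the coassociativity axiom $\mu^j_{\s{F}}\circ\nu_{\s{F}}=!_j(\nu_{\s{F}})\circ\nu_{\s{F}}$ says that the image of $\nu_{\s{F}}$ in $!_j\s{F}$ is stable under the Cartan (Vinogradov) connection; invoking the formal theory of jet bundles \citep{Po,Vi,Mar}, this means $\nu_{\s{F}}:\s{F}\ra!_j\s{F}$ presents an infinitely prolonged linear partial differential equation on $\s{F}$, which I would take as the associated sheaf of solutions. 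On morphisms, a commuting square in $\tu{Soln}(M)$ restricts to a coalgebra morphism $\s{H}\ra\s{H}'$, and a coalgebra morphism $f$ produces the square with verticals $f$ and $!_j(f)$.

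To conclude I would check that the two composites are naturally isomorphic to the identities. One composite is the identity outright: restricting $\mu^j_{\s{F}}$ along $\nu_{\s{F}}$ returns $\nu_{\s{F}}$ by the coassociativity axiom. For the other, starting from $h:\s{H}\ra!_j\sE$ one obtains $\nu_{\s{H}}:\s{H}\ra!_j\s{H}$, and the morphism of $\tu{Soln}(M)$ with components $\id_{\s{H}}$ and $\epsilon^j_{!_j\sE}\circ!_j(h):!_j\s{H}\ra!_j\sE$ is an isomorphism onto $h$, since $\big(\epsilon^j_{!_j\sE}\circ!_j(h)\big)\circ\nu_{\s{H}}=\epsilon^j_{!_j\sE}\circ\mu^j_\sE\circ h=h$ by the comonad identity $\epsilon^j_{!_j\sE}\circ\mu^j_\sE=\id_{!_j\sE}$, and naturality is immediate. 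The hard part is the step buried in the middle of the second equivalence: identifying the Eilenberg--Moore coalgebras of $!_j$ with the infinitely prolonged (formally integrable) linear equations and nothing more. This is the pro-ind vector bundle analogue of the theorem of \citep{Mar}, and carrying it out in the present generality — together with matching up the three notions of morphism — is where the genuine work lies, as opposed to the essentially formal bookkeeping that surrounds it.
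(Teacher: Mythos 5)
Your argument is essentially the paper's own proof (which defers the key identification to Propositions~2.4 and 2.5 of \citep{Mar}): your factorization of $\mu^j_\sE\circ h$ through the monomorphism $!_j(h)$ is exactly the paper's construction of $\tilde{h}$ via its cube diagram, your inverse functor sending $(\s{F},\nu_{\s{F}})$ to the solution sheaf $\nu_{\s{F}}:\s{F}\ra !_j\s{F}$ is the same, and your composite checks reproduce its claimed adjoint equivalence. The only caveat -- shared with the paper's sketch -- is that your final comparison morphism $(\id_{\s{H}},\,\epsilon^j_{!_j\sE}\circ !_j(h))$ is the unit of that adjunction and its right component is not invertible in $\tu{PI}(M)$, so declaring it an isomorphism in $\tu{Soln}(M)$ rests on the same Marvan-style identification of equations with coalgebras that you invoke, not on componentwise invertibility of the square.
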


\begin{proof}
This can be deduced from Proposition~2.4 and Proposition~2.5 of \citep{Mar} so we only sketch the proof.  The first equivalence is clear.  For the second, consider the sheaf of solutions $h^r:\s{H}^r\ra !_{j^r}\sE$ to a $r$th order linear partial differential equation $H^r\subseteq J^r(E)$ and its corresponding infinite prolongation $h:\s{H}\ra !_j\sE$.  Consider the diagram
\[ \begin{tikzcd}[row sep=scriptsize, column sep=scriptsize]
& !_j\s{H}^r \arrow[rr, "!_j(h^r)"] \arrow[dd, near start, "\mu_{\s{H}^r}"] & & !_j!_{j^r}\sE  \arrow[dd, "\mu_{!_{j^r}\sE}"] \\ \s{H} \arrow[ur, "h^*"] \arrow[rr, crossing over, near end, "h"] \arrow[dd, near start, "\tilde{h}"] & & !_j\sE \arrow[ur, "\mu^{\infty,r}_\sE"]\\
& !_j!_j\s{H}^r  \arrow[rr,near start, "!_j!_j(h^r)"] & & !_j!_j !_{j^r}\sE  \\
!_j\s{H} \arrow[ur, "!_jh^*"]\arrow[rr, "!_j(h)"] & & !_j!_j\sE \arrow[ur, near end, "!_j\mu^{\infty,r}_{\sE}"] \arrow[from=uu, crossing over, near start, "\mu_\sE"]\\
\end{tikzcd} \]
in $\tu{PI}(M)$ where $h^*$ is the morphism making the square in the top face commute and $\tilde{h}$ is the morphism making the resulting full diagram commute.  We have a functor $\tilde{(\cdot)}$ sending the solution sheaf $h$ to the pair $(\s{H},\tilde{h}:\s{H}\ra !_j\s{H})$ and this pair can be shown to be a $!_j$-coalgebra.  The right adjoint functor sends a $!_j$-coalgebra $(\sE,\nu:\sE\ra!_j\sE)$ to the solution sheaf $\nu:\sE\subset !_j\sE$ satisfying $\mu_{\sE}=!_j(\nu)$ which is infinitely prolonged.  Then composition with $\tilde{(\cdot)}$ gives an adjoint equivalence.
\end{proof} 

There exists a natural inclusion
\[  \tu{PI}(M)_{!_j}\hookrightarrow\tu{PI}(M)^{!_j}  \]
sending a pro-ind vector bundle $\sE$ to $(!_j\sE,\mu^j_{\sE})$ and a differential operator $F:!_j\sE\ra\sE'$ to the composition $!_j(F)\circ\mu^j_\sE:!_j\sE\ra !_j\sE'$.  The essential image of this inclusion is the full subcategory of $!_j$-coalgebras spanned by cofree $!_j$-coalgebras.  This follows from the fact that the Kleisli category of any comonad is equivalent to the subcategory of cofree coalgebras of the comonad in the Eilenberg-Moore category.  Owing to  Proposition~\ref{solneq}, objects in $\tu{PI}(M)_{!_j}$ can be identified with the sheaf of solutions to a cofree infinitely prolongated linear partial differential equation.   

The category $\tu{PI}(M)$ is not a symmetric monoidal storage category with the monoidal structure given by the tensor product, since the comonad $!_j$ does not satisfy the Seely isomorphisms.  However, for the cocartesian monoidal structure, it is satisfied.

\begin{prop}\label{thm1}
The category of pro-ind vector bundles on $M$ with the jet comonad $!_j$ is a symmetric monoidal storage category for the cocartesian monoidal structure.
\end{prop}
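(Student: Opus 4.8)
The plan is to verify the three conditions of Definition~\ref{storagecomonad} for $\mathsf{C}=\tu{PI}(M)$ equipped with the symmetric monoidal structure $(\oplus,0)$. The first observation is that, since $\sE\oplus\sE'\simeq\sE\times\sE'$ and $0$ is a zero object, the direct sum is a biproduct: it is at once the cocartesian and the cartesian monoidal structure, so $(\tu{PI}(M),\oplus,0)$ is a symmetric monoidal category with finite products whose monoidal unit and terminal object both coincide with $0$. It then remains only to exhibit $!_j$ as a storage comonad.

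Conditions~1 and~2 of Definition~\ref{storagecomonad} will follow formally from working with the cartesian structure. For every $\sE$ one takes as comultiplication the diagonal $c_\sE:=\Delta_{!_j\sE}:!_j\sE\ra !_j\sE\oplus !_j\sE$ and as counit the unique map $e_\sE:!_j\sE\ra 0$; these exhibit $!_j\sE$ as a cocommutative comonoid, cocommutativity being immediate from the symmetry of the product, and since diagonals and maps into a terminal object are natural with respect to arbitrary morphisms, $c$ and $e$ are natural transformations --- this is Condition~1. For the same reason every morphism of $\tu{PI}(M)$, in particular the comonad comultiplication $\mu^j_\sE:!_j\sE\ra !_j!_j\sE$, automatically commutes with these structure maps, so $\mu^j_\sE$ is a morphism of comonoid objects --- this is Condition~2.

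The only point genuinely requiring an argument is Condition~3, namely that $!_j$ preserves finite products. This I would reduce to two elementary facts about jet prolongation: for every $r\in\mathbb{N}$ there is a natural isomorphism $\sJ^r(\sE\oplus\sE')\simeq\sJ^r(\sE)\oplus\sJ^r(\sE')$, because the $r$-jet at a point of a local section of a direct sum is the pair formed by the $r$-jets of its components, and $\sJ^r(0)\simeq 0$. Taking the cofiltered limit over $r$ and then passing levelwise through the $\Pro$ and $\Ind$ constructions, in each of which finite biproducts are computed levelwise, gives $!_j(\sE\oplus\sE')\simeq !_j\sE\oplus !_j\sE'$ and $!_j(0)\simeq 0$ for all $\sE,\sE'\in\tu{PI}(M)$. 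Under these identifications the Seely morphism $\big(!_j(\pi_1)\oplus !_j(\pi_2)\big)\circ c_{\sE\oplus\sE'}$ is precisely the canonical comparison map $\langle !_j(\pi_1),!_j(\pi_2)\rangle$ witnessing product-preservation, hence an isomorphism, and $e:!_j(0)\ra 0$ is the unique arrow out of a zero object, hence an isomorphism as well.

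The main obstacle, such as it is, lies in this last step: carefully establishing compatibility of the jet functor with direct sums and tracing it through the pro-ind limits, and confirming that the map appearing in Condition~3 is indeed the canonical comparison map. Conceptually nothing deep is involved, since jet prolongation is defined pointwise on sections and both $\Pro$ and $\Ind$ preserve finite biproducts; the real content is simply the recognition that the relevant monoidal structure is the biproduct, which makes the comonoid axioms automatic and isolates the genuine geometric input to the single identity $\sJ^r(\sE\oplus\sE')\simeq\sJ^r(\sE)\oplus\sJ^r(\sE')$.
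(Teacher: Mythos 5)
Your proposal is correct and follows essentially the same route as the paper: exploit the fact that $\oplus$ is a biproduct so that the comonoid structure on $!_j\sE$ is the diagonal and the comonoid axioms (Conditions~1 and~2) are automatic, then reduce the Seely isomorphisms to the compatibility of jet prolongation with direct sums, $!_j(\sE\oplus\sE')\simeq !_j\sE\oplus !_j\sE'$ and $!_j(0)\simeq 0$. Your version is slightly more explicit about passing the finite-jet identification through the cofiltered limit and the levelwise $\Pro$/$\Ind$ constructions, but the content is the same.
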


\begin{proof}
The category $(\cV(M),\oplus,0)$ of smooth vector bundles is a $\CMon$-enriched symmetric monoidal category and so we can deduce that the category $\tu{PI}(M)$ pro-ind objects in $\cV(M)$ is also $\CMon$-enriched symmetric monoidal.  Let $\sE,\sE'\in\tu{PI}(M)$ and $s$ be a local section of $\sE$.  By Lemma~\ref{comonad}, $!_j$ is a comonad.  Since $\oplus$ is also a product, every object of $\tu{PI}(M)$ has a unique comonoid structure given by the diagonal map which is cocommutative.  Moreover, any morphism in $\tu{PI}(M)$ is automatically a comonoid morphism.  Therefore, for a comonoid $(!_j\sE,c_\sE,e_\sE)$, the comultiplication is given by
\[  {c}_\sE:!_j\sE\ra !_j\sE\times !_j\sE,  \]
the counit is given by $e_\sE:!_j\sE\ra 0$ which sends $j(s)$ to zero, and $\mu_\sE:!_j\sE\ra !_j!_j\sE$ is a morphism of comonoid objects.  Furthermore, the morphism
\[     \big( !_j(\pi_0)\oplus!_j(\pi_1)\big) \circ c_{\sE\times\sE'}:!_j(\sE\times\sE')\rightarrow !_j\sE\oplus !_j\sE'   \]
is an isomorphism in $\tu{PI}(M)$ since $!_j(\sE\times\sE')\simeq !_j(\sE\oplus\sE')$ and $j(s + s')\simeq j(s) + j(s')$.  Finally, the morphism
\[  e:!_j(*)\rightarrow 0   \]
is an isomorphism since the terminal object $*$ in $\tu{PI}(M)$ is the pro-ind zero vector bundle $0$ and it is clear that $!_j(0)\simeq 0$.  As a result, $!_j$ is a storage comonad and $\tu{PI}(M)$ is a symmetric monoidal storgage category.
\end{proof}

\begin{ex}(Connections).
In analogy with a codereliction, we introduce a map
\[  \Gamma^1_\sE:\sE\ra !_{j^1}\sE \]
in $\cV(M)$ which is natural in $\sE$, such that the linear rule $\epsilon^{j}_\sE\circ\Gamma^1_\sE=\id_\sE$ is satisfied for the comonad $!_{j^1}$.  This is simply a (linear) \textit{connection}.  Indeed, consider the canonical map $j^1:\sE\ra\sJ^1(\sE)$ of sheaves.  Elements in the kernel of this map can be written as $df\otimes s$.  The \textit{covariant derivative} associated to $\Gamma^1_\sE$ is then the (non $\sC_M^\infty$-linear) map
\[  \nabla:\sE\ra\Gamma(\Omega^1\otimes_{\mathbb{R}}E) \]
satisfying the Leibniz rule
\[  \nabla(fs) = f\nabla(s) + df\otimes s   \]
where $\Omega^1$ is the vector bundle of one-forms on $M$.  In local coordinates $(x_i,u^k,u^k_i)\circ\Gamma^1_\sE=(x_i,u^k,\Gamma_i^k)$, its local expression is $\nabla=dx^i\otimes(\partial_i+\Gamma^k_i\partial_k)$.  More generally, higher-order connections $\Gamma^k_\sE:!_{j^{k-1}}\sE\ra !_{j^k}\sE$ can be defined \citep{Li} .
\end{ex}

\begin{ex}(Tangent vector fields).
Let $E=TM$ be the tangent bundle and $\sE=\s{X}$ the sheaf of vector fields on $M$.  Consider the sequent $!A\vdash B$ in linear logic with denotation $\llbracket -\rrbracket_M$ given by a first-order map $F:\llbracket !A\rrbracket_M=!_{j^1}\s{X}\subset!_j\s{X}\ra \llbracket B\rrbracket_M=\sE'$.  Given a vector field $s:U\ra TU$ on $U\subset M$, we have the first-jet
\[ j^1(s):U\ra J^1(TU)\subset J(TU)  \]
to $s$ and a commutative diagram
\[
\begin{tikzcd}
 & !_{j}\s{X}_U \arrow[d,"\epsilon^j_{\s{X}_U}"]  \arrow[dr,"F_U"]  \\
* \arrow[ur, near end, "j^1(s)"] \arrow[r,"s"] & \s{X}_U   \arrow[r, "\widehat{F}_U "] & \sE'_U
\end{tikzcd}
\]
in $\tu{PI}(M)$.  Here $\widehat{F}_U:\s{X}_U\ra\sE'_U$, where $\widehat{F}_U(s)\simeq F_U(j^1(s))$, is the first-order linear differential operator associated to $F_U$.  
\end{ex}

\section{Convenient sheaves and the distributional comonad}\label{csdc}

Up until now, we have considered the category of pro-ind objects in $\cV(M)$.  However, there is another approach which takes advantage of functional analytic properties of the space of sections of a vector bundle.  In particular, the category of pro-ind-vector bundles have several poor formal properties arising from the category $\cV(M)$.  This can be remedied by embedding $\tu{PI}(M)$ into an appropriate category.  We accomplish this by endowing all our function spaces with a complete bornological structure \citep{HN}, or equivalently, a convenient vector space structure \citep{FK,KM}.   

There are a number of equivalent ways one can define the category of convenient vector spaces \citep{KM}.  Our choice is the following.  Let $\tu{Born}$ denote the category of (convex) bornological vector spaces and bounded linear morphisms and $\tu{LCTVS}$ the category of locally convex topological vector spaces and continuous linear morphisms.  Consider the adjunction
\[ \begin{tikzcd}
   \tu{Born}\arrow[bend left]{r}{\gamma}  
& \tu{LCTVS} \arrow[bend left]{l}{\beta}  
\end{tikzcd}  \]
where $\gamma$ is left adjoint to the functor $\beta$ associating to a locally convex topological vector space the bornological vector space with its von-Neumann bornology.  The functor $\gamma$ is fully faithful.  Therefore, we have an isomorphism $V\simeq\beta\circ\gamma(V)$ in $\tu{Born}$, ie. every bornological vector space is isomorphic to a vector space whose bornology comes from some locally convex topological vector space.  The equivalent category of topological bornological vector spaces will be denoted $\tu{TBorn}$.  A topological bornological vector space $V$ is said to be \textit{$c^\infty$-complete} if a curve $c:\mathbb{R}\ra V$ is smooth if and only if for every bounded linear functional $f:V\ra\mathbb{R}$, the composition $f\circ c:\mathbb{R}\ra\mathbb{R}$ is smooth.

We will define the category $\Conv$ of convenient vector spaces to be the full subcategory of $\tu{TBorn}$ spanned by $c^\infty$-complete objects.  The inclusion functor from $\Conv$ to the category $\tu{TBorn}$ has a left adjoint
\[   c^\infty:\tu{TBorn}\ra\Conv  \]
called the $c^\infty$-completion.

The category $\Conv$ is a closed symmetric monoidal category.  We will be careful to distinguish the structure on various function spaces.  For convenient vector spaces $V$ and $W$, then $\Hom(V,W)$ will denote the set of morphisms, $\Hom_\mathbb{R}(V,W)$ the $\mathbb{R}$-vector space of $\mathbb{R}$-linear morphisms and $\uHom(V,W)$ the convenient vector space of continuous $\mathbb{R}$-linear morphisms.  We use the notation $V^\vee:=\Hom_{\mathbb{R}}(V,\mathbb{R})$ for the linear dual of $V$ and $V^\bot:=\uHom(V,\mathbb{R})$ for the continuous linear dual.  

Let $\pi:E\ra M$ be a vector bundle on $M$.  For any $U\subseteq M$, we endow the vector space $\s{E}(U)$ of sections of $E$ with the structure of a convenient vector space induced from the nuclear Fr\'echet topology of uniform convergence on compact subsets in all derivatives seperately.  This makes $\s{E}$ a sheaf of convenient vector spaces on $M$.  The same holds for the cosheaf $\sE_c$ of compactly supported sections of $E$.  See Lemma~5.1.1 of \citep{CG} for a formal proof.

In particular, $\sC^\infty_M$ is a sheaf of convenient vector spaces and moreover a sheaf of convenient algebras.  An algebra is said to be \textit{convenient} if it is a commutative monoid object in the symmetric monoidal category $\Conv$.  This makes $\sE$ a $\s{C}^\infty_M$-module object in the category $\tu{Sh}_{\Conv_k}(M)$ of sheaves of convenient vector spaces.  The category of convenient $\s{C}^\infty_M$-modules will be denoted by
\[    \tu{ConMod}(\s{C}^\infty_M):=\Mod_{\s{C}^\infty_M}(\tu{Sh}_{\Conv_k}(M)).     \]
We have a fully faithful inclusion
\[   i:\tu{PI}(M)\ra\tu{ConMod}(\s{C}^\infty_M)  \]
of categories.  The inclusion sends a pro-ind vector bundle $``\lim"_{r\in\mathbb{N}}``\colim_{q\in\mathbb{N}}"\sE$ to the genuine limit $\lim_{r\in\mathbb{N}}\colim_{q\in\mathbb{N}}\sE$ in $\tu{ConMod}(\s{C}^\infty_M)$.  This limit is well defined since the category of convenient $\sC^\infty_M$-modules is complete and cocomplete.

The category $\tu{ConMod}(\s{C}^\infty_M)$ is a closed symmetric monoidal category with tensor product $\otimes_{\sC^\infty_M}$ which we simply denote by $\otimes$.  The $\sC^\infty_M$-module of continuous linear morphisms between two $\sC^\infty_M$-modules $\sE$ and $\sE'$ will be denoted $\uHom_{\sC^\infty_M}(\sE,\sE')$.  

We now describe some important examples of (co)sheaves of convenient spaces.  Let $\s{T}^{\infty}$ be the convenient sheaf of distributions on $M$ and denote by
\[  \overline{\s{E}}:=\s{E}\otimes_{\s{C}^\infty_M}\s{T}^{\infty}  \]
the convenient sheaf of distributional sections of $E$ on $M$.  Let $\s{T}_{c}^\infty$ be the convenient cosheaf of compactly supported distributions on $M$ and 
\[  \overline{\s{E}}_c:=\s{E}_c\otimes_{\s{C}_c^\infty}\s{T}_{c}^\infty  \] 
the convenient cosheaf of compactly supported distributional sections of $E$ on $M$.  We let $\tu{Dens}(M):=\wedge^nT^*M\otimes\mathfrak{o}_M$ denote the vector bundle of densities on $M$ where $\mathfrak{o}_M$ is the orientation line bundle and $\s{D}ens_M$ the convenient sheaf of sections of $\tu{Dens}(M)$.  

Let $\s{E}^\forall$ denote the convenient sheaf of sections of the vector bundle $E^\forall=E^\vee\otimes\tu{Dens}(M)$ on $M$ where $E^\vee$ is the fiberwise linear dual.  Likewise, let $\s{E}^\forall_c$ denote the convenient cosheaf of compactly supported sections of $E^\vee\otimes\tu{Dens}(M)$ on $M$.  We define $\s{E}^\bot:=\uHom_{\sC^\infty_M}(\s{E},\sC^\infty_M)$ and $\s{E}^\bot_c:=\uHom_{\sC^\infty_M}(\s{E}_c,\sC^\infty_M)$ to be the continuous linear duals endowed with the strong topology of uniform convergence on bounded subsets.  

The fiberwise evaluation pairing between $E$ and $E^\vee$ induces a morphism $fib(-,-):E^\forall\otimes E\ra\tu{Dens}(M)$ of vector bundles which extends to a pairing 
\[  ev_U:\s{E}^\forall_c(U)\times\s{E}(U)\ra\sC^\infty_M(U)  \]
of convenient $\sC^\infty_M(U)$-modules given by sending a pair $(\omega,s)$ on $U\subseteq M$ to the integral $\int_U fib(\omega,s)$.  This construction induces isomorphisms
\[  \s{E}^\bot(U)\simeq\overline{{\s{E}}^\forall_c}(U)   \quad\quad\quad   \s{E}^\bot_c(U)\simeq\overline{{\s{E}^\forall}}(U)  \] 
of convenient $\sC^\infty_M(U)$-modules.   

Let $V$ be a convenient vector space.  A curve $c:\mathbb{R}\ra V$ is said to be \textit{smooth} if all derivatives of $c$ exist in the underlying topological space of $V$.  The set of smooth curves in $V$ is denoted $\cC_V$.  A morphism $f:V\ra W$ of convenient vector spaces is said to be \textit{smooth} if $f(\cC_V)\subseteq\cC_{W}$.  Finally, a morphism $f:\sE\ra\sE'$ between convenient $\sC_M^\infty$-modules is \textit{smooth} if $\sE(U)\ra\sE'(U)$ is smooth for all $U\subseteq M$.  We denote by $\uHom^{\tu{sm}}_{\sC^\infty_M}(\sE,\sE')$ the $\sC^\infty_M$-module of smooth morphisms and 
\[  \s{E}^*:=\uHom^{\tu{sm}}_{\sC^\infty_M}(\s{E},\sC^\infty_M)  \] 
the smooth dual.

Let $\tu{ConMod}^{\tu{sm}}(\s{C}^\infty_M)$ denote the closed symmetric monoidal category of convenient $\sC^\infty_M$-modules and smooth morphisms.  We deduce from Corollary~2.11 of \citep{KM} that a linear map between convenient $\sC^\infty_M$-modules is smooth if and only if it is a bornological morphism.  Therefore, we have a natural forgetful functor
\[    U:\tu{ConMod}(\s{C}^\infty_M)\ra\tu{ConMod}^{\tu{sm}}(\s{C}^\infty_M)   \]
which is the identity on objects and forgets the linear structure.  

We now define a number of different functionals on the space of sections of a vector bundle.

\begin{dfn}
Let $E$ be a vector bundle on $M$.  A \textit{linear functional} on $\sE$ is an element of the continuous linear dual $\sE^\bot$.  A \textit{smooth functional} on $\sE$ is an element of the smooth dual $\sE^*$.  
\end{dfn}

\begin{ex}[Polynomial functions]\label{poly}
An intermediate class of smooth functionals are polynomials.  The algebra of polynomial functions on $\sE$ is given by
\[   \s{O}_{\s{E}}:={\Sym}_{\sC^\infty_M}(\s{E}^\bot)=\bigoplus_{n=0}^\infty((\s{E}^\bot)^{\otimes n})_{S_n}\simeq\bigoplus_{n=0}^\infty((\overline{\s{E}^\forall_c})^{\otimes n})_{S_n}   \]
where the subscript $S_n$ refers to taking coinvariants with respect to the action of the symmetric group on the $n$-fold tensor product.  The algebra of polynomial functions on $\sE_c$ is given by 
\[   \s{O}_{\s{E}_c}:={\Sym}_{\sC^\infty_M}(\s{E}_c^\bot)=\bigoplus_{n=0}^\infty((\s{E}_c^\bot)^{\otimes n})_{S_n}\simeq\bigoplus_{n=0}^\infty((\overline{\s{E}^\forall})^{\otimes n})_{S_n}.   \]
\end{ex} 

\begin{ex}[Formal power series]\label{fps}
A larger class of smooth functionals are those given by formal power series.  That is, the completed symmetric algebra
\[   \widehat{\s{O}}_{\s{E}}:=\widehat{\Sym}_{\sC^\infty_M}(\s{E}^\bot)=\prod_{n=0}^\infty((\s{E}^\bot)^{\otimes n})_{S_n}\simeq\prod_{n=0}^\infty((\overline{\s{E}^\forall_c})^{\otimes n})_{S_n}   \]
and that on compactly supported sections 
\[   \widehat{\s{O}}_{\s{E}_c}:=\widehat{\Sym}_{\sC^\infty_M}(\s{E}_c^\bot)=\prod_{n=0}^\infty((\s{E}_c^\bot)^{\otimes n})_{S_n}\simeq\prod_{n=0}^\infty((\overline{\s{E}^\forall})^{\otimes n})_{S_n}.   \]
This leads to natural inclusions $\sE^\bot\subset\sO_{\sE}\subset\widehat{\sO}_\sE\subset\sE^*$ of sheaves and similarly for compactly supported sections.  See \citep{KT} for a detailed discussion of polynomials and power series in a similar context.
\end{ex} 

We now describe a comonad which we call the distributional comonad which is a generalization of that contained in \citep{BET} to the setting of $\sC_M^\infty$-modules.  Consider the Dirac distributional density map
\[   \delta:\sE\ra(\sE^*)^\bot  \]
sending a section $s$ to $\delta_s:F\mapsto F(s)$ where $F$ is a smooth functional.  We denote by $!_{\delta}\sE$ the $c^\infty$-closure of the linear span of $\delta(\sE)$ in $(\sE^*)^\bot$.  

\begin{lem}\label{deltacomonad}
The endomorphism $!_{\delta}$ induces a comonad on $\tu{ConMod}(\s{C}^\infty_M)$.
\end{lem}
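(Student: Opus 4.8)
The plan is to exhibit the comonad structure $(!_\delta, \mu^\delta, \epsilon^\delta)$ explicitly, mimicking the construction in \citep{BET} but carried out fiberwise/sectionwise over $M$ and compatibly with the $\sC^\infty_M$-module structure. First I would verify that $!_\delta$ is a well-defined endofunctor on $\tu{ConMod}(\s{C}^\infty_M)$: on objects it sends $\sE$ to the $c^\infty$-closure $!_\delta\sE$ of the $\sC^\infty_M$-submodule generated by the image of $\delta:\sE\ra(\sE^*)^\bot$, which is again a convenient $\sC^\infty_M$-module since the category is complete and cocomplete and $c^\infty$-closure preserves these modules; on morphisms, given a continuous linear $f:\sE\ra\sE'$, the precomposition $(-)\circ f$ on smooth functionals is smooth, so dualizing twice and restricting gives $!_\delta f:!_\delta\sE\ra!_\delta\sE'$ with $!_\delta f(\delta_s)=\delta_{f(s)}$. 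Functoriality is then immediate from the defining property on the dense set of Dirac sections and continuity.

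Next I would define the counit $\epsilon^\delta_\sE:!_\delta\sE\ra\sE$. Because $\sE^\bot\subset\sE^*$, there is a restriction map $(\sE^*)^\bot\ra(\sE^\bot)^\bot$, and the latter maps to $\sE$ when we use that $\sE$ embeds (not necessarily isomorphically, cf. the non-reflexivity remark) into its double dual; more precisely $\epsilon^\delta_\sE$ is characterized by $\epsilon^\delta_\sE(\delta_s)=s$, which determines it on a $c^\infty$-dense submodule hence everywhere by continuity, and one checks this is $\sC^\infty_M$-linear and natural in $\sE$. For the comultiplication $\mu^\delta_\sE:!_\delta\sE\ra!_\delta!_\delta\sE$, the key point is that the assignment $s\mapsto\delta_{\delta_s}$ extends to a continuous linear map; here one uses that a smooth functional on $!_\delta\sE$ pulls back along $\delta:\sE\ra!_\delta\sE$ to a smooth functional on $\sE$ (since $\delta$ sends smooth curves to smooth curves — this is essentially the defining adjunction of the Dirac map and the exponential law in $\Conv$), so that $\delta_{\delta_s}$ indeed lies in $(!_\delta\sE^*)^\bot$ and, after restricting to the $c^\infty$-closure, in $!_\delta!_\delta\sE$. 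Then $\mu^\delta_\sE$ is the unique continuous $\sC^\infty_M$-linear extension of $s\mapsto\delta_{\delta_s}$.

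Finally I would verify the three comonad axioms $\epsilon^\delta_{!_\delta\sE}\circ\mu^\delta_\sE=\id$, $!_\delta(\epsilon^\delta_\sE)\circ\mu^\delta_\sE=\id$, and coassociativity $\mu^\delta_{!_\delta\sE}\circ\mu^\delta_\sE=!_\delta(\mu^\delta_\sE)\circ\mu^\delta_\sE$, together with naturality of $\mu^\delta$ and $\epsilon^\delta$. Each of these is an equality of continuous $\sC^\infty_M$-linear maps between convenient $\sC^\infty_M$-modules, and since the Dirac sections $\delta_s$ generate a $c^\infty$-dense submodule, it suffices to check each identity on elements of the form $\delta_s$ (or $\delta_{\delta_s}$, etc.), where all maps act by the evident substitution rules $\delta_s\mapsto s$ and $s\mapsto\delta_s$; the identities then reduce to tautologies. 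The main obstacle is the analytic bookkeeping underlying the well-definedness of $\mu^\delta$: one must be sure that $\delta:\sE\ra!_\delta\sE$ is smooth so that pullback of smooth functionals is legitimate, and that all the relevant spans and their $c^\infty$-closures are preserved by the maps in play — i.e., that the module-theoretic closures over $\sC^\infty_M$ behave well under the double-dualization functor locally on $M$. This is where the extension beyond \citep{BET} (from vector spaces to $\sC^\infty_M$-modules) does real work, and I would handle it by working sectionwise on each open $U\subseteq M$, invoking the convenient-space results of \citep{KM} fiberwise, and then sheafifying; the compatibility with restriction maps is automatic from the naturality of $\delta$.
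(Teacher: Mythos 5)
Your proposal is correct and follows essentially the same route as the paper: both define $\mu^\delta$ and $\epsilon^\delta$ on Dirac generators by $\delta_s\mapsto\delta_{\delta_s}$ and $\delta_s\mapsto s$, extend linearly to the span and pass to the $c^\infty$-closure, and then verify the comonad identities on these generators. The only cosmetic difference is that the paper performs the extension by applying the separation-and-completion functor $\gamma$ (left adjoint to the inclusion of convenient modules into topological bornological modules) rather than your density-plus-continuity argument, and it leaves implicit the smoothness of $\delta$ that you spell out when checking $\mu^\delta$ is well defined.
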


\begin{proof}
We have an inclusion $\Conv\ra\tu{TBorn}$ of closed symmetric monoidal categories which induces an inclusion $\tu{ConMod}(\sC^\infty_M)\ra\tu{TBMod}(\sC^\infty_M)$ of $\sC^\infty_M$-modules where 
\[  \tu{TBMod}(\sC^\infty_M):=\Mod_{\sC^\infty_M}(\Sh_{\tu{TBorn}}(M)). \] 
The left adjoint $\gamma:\tu{TBMod}(\sC^\infty_M)\ra\tu{ConMod}(\sC^\infty_M)$ of this inclusion is a composition of separation and completion functors.

We have a natural comultiplication map $\mu^\delta:!_\delta\ra !_\delta !_\delta$ which object-wise $\mu^\delta_\sE:!_\delta\sE\ra !_\delta!_\delta\sE$ extends linearly the map $\delta_s\mapsto\delta_{\delta_s}$ and applies the separation and completion functor $\gamma$.  The counit map $\epsilon^\delta:!_\delta\ra\id$ object-wise $\epsilon^\delta_\sE:!_\delta\sE\ra\sE$ extends linearly the map $\delta_s\mapsto s$ and applies the functor $\gamma$.  The commutativity of the relevant diagrams can be easily verified.
\end{proof}

We have a linear-non-linear adjunction
\[ \begin{tikzcd}
   \tu{ConMod}(\s{C}^\infty_M)_{!_\delta}\arrow[bend left]{r}{X}  
& \tu{ConMod}(\s{C}^\infty_M) \arrow[bend left]{l}{U}  
\end{tikzcd}  \]
and a symmetric monoidal comonad $!_\delta=X\circ U$ which we call the \textit{distributional comonad}.  The functor $X$ sends a $\sC^\infty_M$-module $\sE$ to the $c^\infty$-closure of the linear span of $\delta(\sE)$ and $U$ is a bijection on objects.    

\begin{prop}\label{kls}
There exists an equivalence
\[   \tu{ConMod}(\s{C}^\infty_M)_{!_\delta}\simeq \tu{ConMod}^{\tu{sm}}(\s{C}^\infty_M)  \]
of categories.
\end{prop}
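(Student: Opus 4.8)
The plan is to construct the equivalence as a functor that is the identity on objects --- both categories have convenient $\sC^\infty_M$-modules as objects --- and whose action on morphisms is mediated by the Dirac map. Define
\[ \Phi\colon\tu{ConMod}(\sC^\infty_M)_{!_\delta}\ra\tu{ConMod}^{\tu{sm}}(\sC^\infty_M) \]
to be the identity on objects and to send a Kleisli morphism $\phi\colon !_\delta\sE\ra\sE'$, that is, a continuous $\sC^\infty_M$-linear map, to the composite $\phi\circ\delta_\sE\colon\sE\ra\sE'$, where $\delta_\sE\colon\sE\ra !_\delta\sE$ is the corestriction of the Dirac map $s\mapsto\delta_s$. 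First I would verify that $\Phi$ lands in smooth morphisms and is functorial. The map $\delta_\sE$ is smooth: a curve $c$ in $\sE$ is smooth exactly when $F\circ c$ is smooth for every smooth functional $F\in\sE^*$, and this is precisely the condition for $\delta_\sE\circ c$ to be a smooth curve in $!_\delta\sE\subseteq(\sE^*)^\bot$; and $\phi$, being continuous and linear, is smooth by Corollary~2.11 of \citep{KM}. Functoriality is a direct check on Dirac sections: the Kleisli identity at $\sE$ is the counit $\epsilon^\delta_\sE$ and $\epsilon^\delta_\sE\circ\delta_\sE=\id_\sE$ (because $\epsilon^\delta_\sE$ sends $\delta_s$ to $s$), while for the Kleisli composite $\psi\circ !_\delta(\phi)\circ\mu^\delta_\sE$ of $\phi\colon !_\delta\sE\ra\sE'$ and $\psi\colon !_\delta\sE'\ra\sE''$ one has $!_\delta(\phi)\circ\mu^\delta_\sE\circ\delta_\sE=\delta_{\sE'}\circ(\phi\circ\delta_\sE)$, since both maps send $s$ to $\delta_{\phi(\delta_s)}$. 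The same identities show that $\Phi$ restricts, along the canonical inclusion $\tu{ConMod}(\sC^\infty_M)\ra\tu{ConMod}(\sC^\infty_M)_{!_\delta}$, to the forgetful functor $U\colon\tu{ConMod}(\sC^\infty_M)\ra\tu{ConMod}^{\tu{sm}}(\sC^\infty_M)$.

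The substance of the proposition is that $\Phi$ is fully faithful, that is, that precomposition with $\delta_\sE$ is a bijection
\[ \uHom_{\sC^\infty_M}(!_\delta\sE,\sE')\ra\uHom^{\tu{sm}}_{\sC^\infty_M}(\sE,\sE'),\qquad \phi\mapsto\phi\circ\delta_\sE, \]
for all convenient $\sC^\infty_M$-modules $\sE,\sE'$; equivalently, that $(!_\delta\sE,\delta_\sE)$ is the free convenient $\sC^\infty_M$-module on $\sE$ regarded as an object of $\tu{ConMod}^{\tu{sm}}(\sC^\infty_M)$, so that the forgetful functor $U$ has a left adjoint with functor part $!_\delta$. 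Injectivity is immediate: two continuous linear maps agreeing on $\delta(\sE)$ agree on the submodule it generates and hence, by continuity, on its $c^\infty$-closure $!_\delta\sE$. For surjectivity one is given a smooth map $f\colon\sE\ra\sE'$, puts $\phi(\delta_s):=f(s)$ on the Dirac masses, and must extend this to a (necessarily unique) continuous $\sC^\infty_M$-linear map $!_\delta\sE\ra\sE'$.

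I expect the existence of this extension --- equivalently, the boundedness of the linearisation of $f$ on the span of the Dirac masses --- to be the main obstacle. My plan is to reduce it to the case $M=*$, where it is the content of \citep{BET}: there the free convenient vector space on a convenient vector space is realised as the Mackey (equivalently $c^\infty$-) closure of the linear span of its Diracs, and smooth maps out of it are shown to correspond bijectively to bounded linear maps out of the free object. One then bootstraps to arbitrary $M$ using the presentation of $\tu{ConMod}(\sC^\infty_M)$ as $\sC^\infty_M$-module objects in sheaves of convenient vector spaces: run the argument section-wise on the underlying convenient sheaves, and then check that the resulting linearisations are natural in the open set $U\subseteq M$ and intertwine the $\sC^\infty_M$-actions, the latter following from naturality of $\delta$ in $\sE$ and the fact that the module structure on $(\sE^*)^\bot$ is inherited from that on $\sC^\infty_M$. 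Essential surjectivity of $\Phi$ is trivial since it is bijective on objects, so together with full faithfulness this gives the asserted equivalence.
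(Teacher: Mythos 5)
Your proposal is correct and follows essentially the same route as the paper: the paper's proof consists precisely of observing that the Dirac map is smooth, reducing to the case of convenient vector spaces objectwise (sectionwise over $U\subseteq M$), and invoking Lemma~5.1 of \citep{BET} for the universal property of the Mackey/$c^\infty$-closure of the span of Diracs. Your write-up simply makes explicit the scaffolding the paper leaves implicit (the identity-on-objects Kleisli functor, functoriality on Dirac sections, and the compatibility of the sectionwise linearisations with restriction and the $\sC^\infty_M$-action).
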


\begin{proof}
The Dirac distributional density map is smooth.  It suffices to check the condition objectwise and so the result follows from Lemma~5.1 of \citep{BET}.
\end{proof}

Consider the sequent $!A\vdash B$ in differential linear logic with denotation $\llbracket-\rrbracket_M$ given by the functional $F:\llbracket !A\rrbracket_M=!_\delta\sE\ra\llbracket B\rrbracket_M=\sE'$ and the diagram
\[
\begin{tikzcd}
 & !_\delta\sE \arrow[d,"\epsilon^\delta_\sE"] \arrow[dr,"F"]  \\
* \arrow[ur, "\delta_s"] \arrow[r,"s"] & \sE   \arrow[r, "{F}^{sm}"] & \sE'
\end{tikzcd}
\]
of convenient $\sC^\infty_M$-modules. From Proposition~\ref{kls}, we have the smooth functional ${F}^{sm}:\sE\ra\sE'$ with ${F}^{sm}(s)\simeq F(\delta_{s})$ associated to $F$.  We also define a map $\bar{d}^\delta_{\sE}:\sE\ra !_\delta\sE$ for the distributional comonad, following \citep{BET}, by 
\[   \bar{d}^\delta_{\sE}(s)=\underset{h\ra 0}\lim\frac{\delta_{hs}-\delta_0}{h}  \]
where $s\in\sE$, $0$ is the zero section and $h$ the constant sheaf. 

\begin{thm}\label{didi}
The category of convenient $\sC^\infty_M$-modules with the distributional comonad $!_{\delta}$ and map $\bar{d}^\delta$ is a model for intuitionistic differential linear logic.
\end{thm}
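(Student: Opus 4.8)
The plan is to verify the four ingredients in the definition of a model for intuitionistic differential linear logic directly for $(\tu{ConMod}(\sC^\infty_M),\otimes_{\sC^\infty_M},\sC^\infty_M)$: that it is (i) closed symmetric monoidal, (ii) $\CMon$-enriched symmetric monoidal, (iii) equipped with a storage comonad $!_\delta$, and (iv) equipped with a codereliction $\bar d^\delta$. Part (i) has already been recorded. Throughout, the strategy is to reduce each verification to the corresponding statement for convenient vector spaces established in \citep{BET} (the case $M=*$), exploiting that an object of $\tu{ConMod}(\sC^\infty_M)$ is a $\sC^\infty_M$-module object in sheaves of convenient vector spaces, so that the structure maps of $!_\delta$ and the coherence diagrams can be checked on sections $\sE(U)$ over opens $U\subseteq M$, where they become statements about the $\sC^\infty_M(U)$-module $\sE(U)$ regarded, after forgetting the module structure, as a convenient vector space.

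For (ii): the identity $\sE\oplus\sE'\simeq\sE\times\sE'$ recorded for $\cV(M)$ extends to $\tu{ConMod}(\sC^\infty_M)$, so this category has finite biproducts; its hom objects are convenient $\sC^\infty_M(M)$-modules, in particular commutative monoids under addition, and the required compatibility $(f+g)\otimes h=f\otimes h+g\otimes h$, $0\otimes h=0$ is inherited from additivity of $\otimes_{\sC^\infty_M}$. For (iii), Lemma~\ref{deltacomonad} already gives that $!_\delta$ is a comonad; it remains to produce the cocommutative comonoid structure and the Seely isomorphisms of Definition~\ref{storagecomonad}. The comultiplication $c_\sE\colon !_\delta\sE\to !_\delta\sE\otimes !_\delta\sE$ and counit $e_\sE\colon !_\delta\sE\to\sC^\infty_M$ are obtained, after applying the separation--completion functor $\gamma$, from the linear extensions of $\delta_s\mapsto\delta_s\otimes\delta_s$ and $\delta_s\mapsto 1$; cocommutativity, naturality, and that $\mu^\delta_\sE$ is a comonoid morphism are checked objectwise as in \citep{BET}. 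The Seely isomorphisms $\big(!_\delta(\pi_1)\otimes !_\delta(\pi_2)\big)\circ c_{\sE\times\sE'}\colon !_\delta(\sE\times\sE')\to !_\delta\sE\otimes !_\delta\sE'$ and $e\colon !_\delta(*)\to\sC^\infty_M$ again hold objectwise --- the span of Dirac distributions on a biproduct is the tensor of the spans, and the zero module carries only scalar distributions --- and their compatibility with $\otimes_{\sC^\infty_M}$ follows because this tensor product and $\gamma$ are strong symmetric monoidal. By Theorem~7.4 of \citep{BCLS} we then obtain the additive bialgebra modality, in particular the commutative monoid $(!_\delta\sE,\bar c_\sE,\bar e_\sE)$ with $\bar e_\sE(1)=\delta_0$ that is needed to state (iv).

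For (iv) we check that $\bar d^\delta_\sE(s)=\lim_{h\to 0}\frac{\delta_{hs}-\delta_0}{h}$ is a well-defined morphism $\sE\to !_\delta\sE$ in $\tu{ConMod}(\sC^\infty_M)$, natural in $\sE$, satisfying the constant, linear, product and chain rules of Definition~\ref{codereliction}. The limit exists in $!_\delta\sE$ because $!_\delta\sE$ is $c^\infty$-complete and $h\mapsto\delta_{hs}$ is a smooth curve; $\sC^\infty_M$-linearity and naturality follow from $!_\delta(g)(\delta_s)=\delta_{g(s)}$ together with the continuity and linearity of $!_\delta(g)$. The constant rule holds since $e_\sE$ evaluates against the unit functional, on which $\delta_{hs}$ and $\delta_0$ agree; the linear rule holds since $\epsilon^\delta_\sE(\delta_{hs})=hs$; the product rule is the Leibniz rule applied to $h\mapsto c_\sE(\delta_{hs})=\delta_{hs}\otimes\delta_{hs}$ at $h=0$, using $\delta_0=\bar e_\sE(1)$. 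The chain rule is the one genuinely computational identity: after unwinding $\overline\partial_\sE=\bar c_\sE\circ(\id_{!_\delta\sE}\otimes\bar d^\delta_\sE)$ and transporting the two composites of Definition~\ref{codereliction} along $\gamma$, every arrow in sight is built from $\delta$, $\mu^\delta$, $c_\sE$ and $\bar c_\sE$, each defined by a formula on Diracs and extended linearly and continuously, so the identity reduces objectwise to the chain rule for the Dirac comonad on convenient vector spaces in \citep{BET}.

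The main obstacle is the bookkeeping around the functor $\gamma$ (separation followed by $c^\infty$-completion) together with the (co)filtered (co)limits implicit in working over the sheaf of algebras $\sC^\infty_M$: one must check that the limit defining $\bar d^\delta$ and the Leibniz-type computations underlying the product and chain rules are compatible with $\gamma$, so that the objectwise verifications genuinely assemble into identities in $\tu{ConMod}(\sC^\infty_M)$. Once this compatibility is established, each remaining step is a local restatement of the corresponding fact for convenient vector spaces proved in \citep{BET}.
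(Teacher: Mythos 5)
Your proposal is correct and follows essentially the same route as the paper: establish the closed symmetric monoidal, $\CMon$-enriched structure, put the cocommutative comonoid structure on $!_\delta\sE$ via $\delta_s\mapsto\delta_s\otimes\delta_s$ and $\delta_s\mapsto 1$ (extended linearly and closed under $\gamma$), obtain the Seely isomorphisms objectwise from the convenient vector space case of \citep{BET}, and verify the codereliction identities on Dirac distributions. The only cosmetic difference is that you check all four codereliction rules directly, whereas the paper invokes Theorem~6 and Corollary~4 of \citep{BCLS} to reduce the verification to the linear and chain rules, which it then computes explicitly on Diracs.
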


\begin{proof}
The category of convenient vector spaces is locally presentable \citep{Wa} and closed symmetric monoidal \citep{KM}.  Sheaves with values in a locally presentable closed symmetric monoidal category themselves form a locally presentable closed symmetric monoidal category, as do modules over a commutative monoid object in such a category of sheaves \citep{Mes}.  Therefore the category of convenient $\sC^\infty_M$-modules is locally presentable closed symmetric monoidal.  It is moreover an additive, and therefore $\CMon$-enriched, symmetric monoidal category.

By Lemma~\ref{deltacomonad}, the functor $!_\delta$ is a comonad.  For each object $\sE$ in $\tu{ConMod}(\s{C}^\infty_M)$, we define a cocommutative comonoid object $(!_\delta\sE,c_\sE,e_\sE)$ using the maps $e_\sE:\delta_s\mapsto 1$ and
\[  c_\sE:\delta_s\mapsto \delta_s\otimes\delta_s,  \]
and then extending linearly and applying the separation and completion functor $\gamma$ (see the proof of Lemma~\ref{deltacomonad}).  Also, since the diagrams 
\[
\begin{tikzcd}
\delta_s\arrow[r, mapsto, "c_\sE"]  \arrow[d, mapsto, "\mu_\sE "]  & \delta_s\otimes\delta_s \arrow[d, mapsto, "\mu_\sE\otimes\mu_\sE "] \\
\delta_{\delta_s} \arrow[r, mapsto, "c_{!_\delta\sE}"]   & \delta_{\delta_s}\otimes\delta_{\delta_s}
\end{tikzcd} 
\hspace{1cm}
\begin{tikzcd}
\delta_s \arrow[r, mapsto, "\mu_\sE"] \arrow[rd, mapsto, "e_\sE"] & \delta_{\delta_s}  \arrow[d, mapsto, "e_{!_\delta\sE}"]\\
 & 1
\end{tikzcd}
\]
commute, $\mu_\sE:!_\delta\sE\ra!_\delta!_\delta\sE$ a morphism of comonoid objects in $\tu{ConMod}(\s{C}^\infty_M)$.  Let $\sE$ and $\sE'$ be convenient $\sC^\infty_M$-modules.  Then
\[     !_{\delta}(\sE\times\sE')\simeq !_{\delta}\sE\otimes !_{\delta}\sE' \]
is an isomorphism of sheaves by extending the fiberwise statement of Proposition~5.2.4 of \citep{FK} and Proposition~5.6 of \citep{BET}.  

It remains to show that the map $\bar{d}^\delta$ satisfies the conditions to be a codereliction.  Firstly, the diagram
\[
\begin{tikzcd}
\s{E}\arrow[r,"\bar{d}^\delta_\sE"]  \arrow[d,"F"]  & !_\delta\sE \arrow[d,"!_\delta(F)"] \\
\s{E}'\arrow[r,"\bar{d}^\delta_{\sE'}"]   & !_\delta\sE'
\end{tikzcd} 
\]
commutes since $\lim_{h\ra 0}\frac{\delta_{hF(s)}-\delta_0}{h}=\lim_{h\ra 0}\frac{\delta_{F(hs)}-\delta_{F(0)}}{h}$ owing to the property that $F$ is a morphism of $\s{C}^\infty_M$-modules (explicitly, $hs(x)=h(x)s(x)$ and $F(hs)(x)=h(x)F(s)(x))$.  Therefore, $\bar{d}^\delta$ is a natural transformation.  By Theorem~6 and Corollary~4 of \citep{BCLS}, it now suffices to show that the linear and chain rules of Definition~\ref{codereliction} are satisfied.  The left hand side of the linear rule $\epsilon_\sE\circ\bar{d}_\sE^\delta$ given by 
\[  s\mapsto\underset{h\ra 0}\lim\frac{\delta_{hs}-\delta_0}{h}\mapsto\underset{h\ra 0}\lim\big(\frac{1}{h}(hs-0)\big)=s          \]
coincides with the identity due to continuity of $\epsilon_\sE$.  The multiplication map of the monoid object in the bialgebra structure is given by $\bar{c}_{\sE}:\delta_s\otimes\delta_t\mapsto\delta_{s + t}$ and then extending linearly and applying $\gamma$.  Therefore, the left hand side $\mu_\sE\circ\bar{c}_\sE\circ(\bar{d}_\sE\otimes\id_{!_\delta\sE})$ of the chain rule gives 
\[  
s\otimes\delta_t\mapsto\left(\underset{h\ra 0}\lim\frac{\delta_{hs}-\delta_0}{h}\right)\otimes\delta_t\mapsto\underset{h\ra 0}\lim\frac{\delta_{hs + t}-\delta_t}{h}\mapsto\underset{h\ra 0}\lim\frac{\delta_{\delta_{hs + t}}-\delta_{\delta_t}}{h}    
\]
which corresponds to the right hand side $\bar{c}_{!_\delta\sE}\circ(\bar{d}_{!_\delta\sE}\otimes\mu_{!_\delta\sE})\circ (\bar{c}_\sE\otimes\id_{!_\delta\sE})\circ(\bar{d}_\sE\otimes c_\sE)$ by
\[  
s\otimes\delta_t\mapsto\left(\underset{h\ra 0}\lim\frac{\delta_{hs}-\delta_0}{h}\right)\otimes(\delta_t\otimes\delta_t)\mapsto\left(\underset{h\ra 0}\lim\frac{\delta_{hs + t}-\delta_t}{h}\right)\otimes\delta_t\mapsto  
\]
\[  
\left(\underset{h',h\ra 0}\lim\frac{\delta_{(\frac{h'}{h}(\delta_{hs + t}-\delta_t))} -\delta_{\delta_0}}{h'}\right)\otimes\delta_{\delta_t} \mapsto\underset{h',h\ra 0}\lim\frac{\delta_{(\frac{h'}{h}(\delta_{hs + t}-\delta_t)+\delta_t)} -\delta_{\delta_t}}{h'}             
\]
using associativity of the tensor product and then taking the limit $h=h'\rightarrow 0$ along the diagonal.
\end{proof}

We will call $\bar{d}^\delta$ the \textit{distributional codereliction}.  Let $F:!_\delta\sE\ra\sE'$ be a morphism in $\tu{ConMod}(\sC^\infty_M)$.  The deriving transformation $\overline{\partial}_\sE:\sE\otimes !_\delta\sE\ra !_\delta\sE$ is given by 
\[  
\overline{\partial}_\sE: t\otimes\delta_{s}\xmapsto{(\bar{d}^\delta_{\sE}\otimes\id)}\left( \underset{h\ra 0}\lim\frac{\delta_{ht}-\delta_0}{h}\right)\otimes\delta_{s}\xmapsto{\bar{c}_{\sE}}
\underset{h\ra 0}\lim\frac{\delta_{s+ht}-\delta_{s}}{h}
\]
and the derivative $\tu{d}{F}:=F\circ\overline{\partial}_{\sE}:\sE\otimes!_\delta\sE\ra\sE'$ of $F$ in $\tu{ConMod}(\sC^\infty_M)$ is 
\[  
\tu{d}{F}: t\otimes\delta_{s}\mapsto\underset{h\ra 0}\lim\frac{F(\delta_{s+ht})- F(\delta_s)}{h}   
 \]
for local sections $s,t\in\sE$.  Using the adjunction of Proposition~\ref{kls}, we have, by abuse of notation, an operator 
\[  
\tu{d}:\Hom_{\sC^\infty_M}^{\tu{sm}}(\sE,\sE')\ra\Hom_{\sC^\infty_M}(\sE,\uHom^{}_{\sC^\infty_M}(\sE,\sE'))  
\]
defined by
\[  
\tu{d}{F}^{sm}(s,t)=\underset{h\ra 0}\lim\frac{{F}^{sm}(s+ht)-{F}^{sm}(s)}{h} = \left.\frac{d}{dh}\right\vert_{h=0}{F}^{sm}(s+ht).  
\]
This derivative operator is linear and bounded and $\tu{d}{F}^{sm}(s,t)$ is the functional derivative at the section $s$ of $\sE$ in the direction of the section $t$.  When $\sE'=\sC^\infty_M$, another common notation for $\tu{d}{F}^{sm}(s,t)$ is 
\[  \tu{d}{F}^{sm}(s,t)= \int_U\frac{\delta {F}^{sm}}{\delta s}(x)t(x) dx  \]
for $U\subseteq M$.

\section{Comonad composition and non-linearity}\label{ccomp}

In Section~\ref{csdc}, we have shown that the category of convenient $\sC^\infty_M$-modules is a model for intuitionistic differential linear logic using the distributional comonad $!_\delta$.  Combining this result with the extension of the model in Section~\ref{jetcom} to this same category, we obtain a compatible model based on composition with the infinite jet comonad, ie. these two comonads interact in a natural way so that their composition induces a model for intuitionistic differential linear logic.

Firstly, we update the finite jet functor by lifting it to an endofunctor
$   !_{j^r}: \tu{ConMod}(\s{C}^\infty_M)\ra\tu{ConMod}(\s{C}^\infty_M)  $
and leverage the convenient structure to define
\[   \sJ(\sE):=\underset{r\in\mathbb{N}}{\lim}(\s{J}^r(\sE))  \]
as a genuine limit in $\tu{ConMod}(\s{C}^\infty_M)$.  The infinite prolongation thus induces an endofunctor
\[  !_j: \tu{ConMod}(\s{C}^\infty_M)\ra\tu{ConMod}(\s{C}^\infty_M)  \]
on the category of convenient $\sC^\infty_M$-modules.  The following result is clear from Lemma~\ref{comonad}.

\begin{cor}
The endomorphism $!_j$ is a comonad on $\tu{ConMod}(\s{C}^\infty_M)$.
\end{cor}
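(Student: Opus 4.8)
The plan is to obtain the corollary directly from Lemma~\ref{comonad}, transporting the comonad structure from the formal pro-objects of Section~\ref{jetcom} to the genuine limits available in $\tu{ConMod}(\s{C}^\infty_M)$ by completeness.

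First I would record that each finite jet functor $!_{j^r}$, now regarded as an endofunctor of $\tu{ConMod}(\s{C}^\infty_M)$, carries the evaluation transformation $\epsilon^{j^r}:!_{j^r}\ra\id$, the projections $\pi_{r+1,r}:!_{j^{r+1}}\ra!_{j^r}$, and the re-prolongation maps $\mu^{r,q}:!_{j^{r+q}}\ra!_{j^q}\circ!_{j^r}$ described in Section~\ref{jetcom}, all mutually compatible. Since the category is complete, $!_j\sE=\lim_r\sJ^r(\sE)$ exists as a genuine, objectwise limit, and the inclusion $i:\tu{PI}(M)\ra\tu{ConMod}(\s{C}^\infty_M)$ identifies it with the formal pro-object $``\lim"_r\sJ^r(\sE)$ on the essential image of $i$. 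Passing to the limit in $r$ (and in $q$ for the comultiplication) yields the counit $\epsilon^j:!_j\ra\id$ and comultiplication $\mu^j:!_j\ra!_j\circ!_j$, which on local sections are still the maps $j_x(s)\mapsto s(x)$ and $j_x(s)\mapsto j_x(j(s))$ of Lemma~\ref{comonad}.

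Next I would check the comonad identities $(\epsilon^j\circ!_j)\circ\mu^j=\id_{!_j}=(!_j\circ\epsilon^j)\circ\mu^j$ and $(\mu^j\circ!_j)\circ\mu^j=(!_j\circ\mu^j)\circ\mu^j$. Each is an equality of morphisms whose codomain is a limit, so it suffices to verify it after post-composing with the canonical projections onto the finite jet pieces; there it reduces to the corresponding identity among the finite level maps $\epsilon^{j^r}$, $\pi_{\bullet,\bullet}$, $\mu^{\bullet,\bullet}$, which is exactly the straightforward verification in the proof of Lemma~\ref{comonad} read level by level. Equivalently, one observes that on the essential image of $i$ the triple $(!_j,\mu^j,\epsilon^j)$ agrees with the comonad of Lemma~\ref{comonad}, so the identities hold there, and since $!_j$ on an arbitrary convenient $\s{C}^\infty_M$-module is assembled from the same finite jet functors, naturality propagates the identities everywhere.

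I do not anticipate a genuine obstacle: this is a corollary, and the only point needing care is that the formal limits and colimits of Section~\ref{jetcom} are faithfully realized as honest limits in $\tu{ConMod}(\s{C}^\infty_M)$, which is guaranteed by its completeness and cocompleteness together with the description of $i$ in Section~\ref{csdc}. The remaining work is the mild bookkeeping of confirming that the finite level structure maps form compatible cones, so that their limits are well-defined morphisms of convenient $\s{C}^\infty_M$-modules; once that is in place the comonad axioms follow formally.
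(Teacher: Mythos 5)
Your proposal is correct and follows the same route as the paper, which simply deduces the corollary from Lemma~\ref{comonad} after realizing $\sJ(\sE)=\lim_r\sJ^r(\sE)$ as a genuine limit in the complete category $\tu{ConMod}(\s{C}^\infty_M)$; you have merely spelled out the levelwise verification of the comonad identities that the paper leaves implicit as ``clear.''
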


The category $\tu{ConMod}(\s{C}^\infty_M)$ is endowed with a cocartesian monoidal structure with monoidal product $\oplus$ and unit $0$.  We have a linear-non-linear adjunction
\[ \begin{tikzcd}
   \tu{ConMod}(\s{C}^\infty_M)_{!_j}\arrow[bend left]{r}{X}  
& \tu{ConMod}(\s{C}^\infty_M) \arrow[bend left]{l}{U}  
\end{tikzcd}  \]
and a symmetric monoidal comonad $!_j=X\circ U$ which we call the jet comonad.  Here $X$ sends a $\sC^\infty_M$-module $\sE$ to $\sJ(\sE)$ and the right adjoint $U$ is an object bijection.  The jet codereliction $\bar{d}^j$ extends to a natural transformation on $\tu{ConMod}(\sC^\infty_M)$.  A corollary of Theorem~\ref{thm1} is now the following.

\begin{cor}
The category of convenient $\sC^\infty_M$-modules with the jet comonad $!_j$ is a symmetric monoidal storage category for the cocartesian monoidal structure.
\end{cor}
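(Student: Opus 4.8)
The plan is to re-run the proof of Proposition~\ref{thm1}, verifying the three conditions of Definition~\ref{storagecomonad} directly in $\tu{ConMod}(\sC^\infty_M)$ equipped with the cocartesian monoidal structure $(\oplus,0)$. The key observation, exactly as before, is that $\oplus$ is a biproduct: since $\tu{ConMod}(\sC^\infty_M)$ is additive (hence $\CMon$-enriched symmetric monoidal), for every object $\sE$ the direct sum $\sE\oplus\sE$ coincides with the product $\sE\times\sE$, the diagonal together with the zero map $\sE\to 0$ endows $\sE$ with a canonical cocommutative comonoid structure, and every morphism of $\tu{ConMod}(\sC^\infty_M)$ is automatically a comonoid morphism for these structures. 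In particular $(!_j\sE,c_\sE,e_\sE)$ is a cocommutative comonoid with $c_\sE\colon!_j\sE\to!_j\sE\oplus!_j\sE$ the diagonal and $e_\sE\colon!_j\sE\to 0$, and $\mu^j_\sE$ is a comonoid morphism; this settles Conditions~1 and~2 of Definition~\ref{storagecomonad}, granted that $!_j$ is a comonad, which is the content of the preceding corollary.

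It therefore only remains to exhibit the Seely isomorphisms of Condition~3. First I would check that $!_j$ preserves finite products, i.e. that the canonical map $\big(!_j(\pi_1)\oplus!_j(\pi_2)\big)\circ c_{\sE\times\sE'}\colon!_j(\sE\oplus\sE')\to!_j\sE\oplus!_j\sE'$ is an isomorphism. This reduces to $\sJ(\sE\oplus\sE')\simeq\sJ(\sE)\oplus\sJ(\sE')$, which holds because at each finite order the jet bundle functor splits over direct sums, $\sJ^r(\sE\oplus\sE')\simeq\sJ^r(\sE)\oplus\sJ^r(\sE')$ (concretely $j^r(s+s')=j^r(s)+j^r(s')$), and because the genuine limit $\sJ=\lim_r\sJ^r$ in $\tu{ConMod}(\sC^\infty_M)$ commutes with the finite biproduct $\oplus$ (finite products commute with limits, and in an additive category finite products are biproducts). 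Second, $!_j(0)\simeq 0$ since $\sJ^r(0)=0$ for all $r$ and the limit of the constant zero diagram is $0$; hence the induced map $e\colon!_j(0)\to 0$ is an isomorphism. This gives all three conditions, so $!_j$ is a storage comonad and $\tu{ConMod}(\sC^\infty_M)$ is a symmetric monoidal storage category for $(\oplus,0)$.

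Alternatively, one can phrase the argument through the fully faithful inclusion $i\colon\tu{PI}(M)\hookrightarrow\tu{ConMod}(\sC^\infty_M)$: this functor is strong symmetric monoidal for $\oplus$, sends $0$ to $0$, and intertwines the two jet comonads, both being computed as the same filtered limit (formal on one side, genuine on the other), so the storage structure of Proposition~\ref{thm1} transports along $i$. Since the corollary is asserted for all of $\tu{ConMod}(\sC^\infty_M)$ and not merely for the essential image of $i$, this route still requires the verification above for a general convenient $\sC^\infty_M$-module; the transport merely reuses the computations rather than replacing them.

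I do not anticipate a serious obstacle. The one point deserving care — which I would treat as the crux — is that replacing the formal pro-object of Proposition~\ref{thm1} by the genuine limit $\lim_r\sJ^r(\sE)$ in $\tu{ConMod}(\sC^\infty_M)$ does not disturb the Seely isomorphisms; this is guaranteed by completeness of $\tu{ConMod}(\sC^\infty_M)$ together with the commutation of finite biproducts with limits, but it is worth stating explicitly because the monoidal product in play is $\oplus$ rather than $\otimes_{\sC^\infty_M}$, and it is precisely for $\oplus$ that $!_j$ is well behaved.
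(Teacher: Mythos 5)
Your proposal is correct and follows essentially the same route as the paper, which offers no separate argument and simply presents the statement as a corollary of Proposition~\ref{thm1} once $!_j$ is known to be a comonad on $\tu{ConMod}(\sC^\infty_M)$ via the genuine limit $\lim_r\sJ^r(\sE)$. Your explicit verification that the Seely maps survive the passage from the formal pro-object to the genuine limit (finite biproducts commuting with limits, $\sJ^r(\sE\oplus\sE')\simeq\sJ^r(\sE)\oplus\sJ^r(\sE')$, $!_j(0)\simeq 0$) is precisely the step the paper leaves implicit, and your remark that pure transport along $i\colon\tu{PI}(M)\hookrightarrow\tu{ConMod}(\sC^\infty_M)$ would only cover the essential image of $i$ is a fair and accurate refinement of the paper's terse deduction.
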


Owing to the discussion in Section~\ref{jetcom}, we have an isomorphism
\[  \Hom_{\tu{ConMod}^{\tu{sm}}(\sC^\infty_M)}(\sJ(\sE),\sE')\simeq\tu{Diff}^{\,\tu{sm}}(\sE,\sE')  \]
where the right hand side denotes the set of \textit{smooth partial differential operators}.  

We now define a number of different functionals on the space of jets of sections of a vector bundle.

\begin{dfn}
Let $E$ be a vector bundle on $M$.  A \textit{local linear functional} on $\sE$ is an element of the continuous linear dual $(!_j\sE)^\bot$.  A \textit{local smooth functional} on $\sE$ is an element of the smooth dual $(!_j\sE)^*$.  
\end{dfn}

Local smooth functionals are also called \textit{Lagrangians} in certain applications.  Lagrangians given by formal power series are particularly important in the study of perturbative classical and quantum field theories.  This is demonstrated in the following example.

\begin{ex}\label{laf}
Building on Example~\ref{fps}, the algebra of formal power series of local linear functionals is given by
\[   \s{O}^{\tu{loc}}_{\s{E}}:=\widehat{\Sym}_{\sC^\infty_M}(!_j\s{E}^\bot)   \]
elements of which will be called \textit{Lagrangian densities}.  More explicitly, we identify the $n$th component of a Lagrangian density on $M$ with a compactly supported distributional section of the bundle $(J(E)^\forall)^{\boxtimes n}$ on $M^n$.  Since local linear functionals depend only on the local nature of a section $s$ at each point, ie. its jet, then we can interpret its $n$th component as a finite sum of densities of the form
$  (D_1s)(D_2s)\ldots(D_ns)\tu{d}\Omega $
where each $D_i:\sE\ra\sC^\infty_M$ is a differential operator.  The natural inclusion 
\[     \iota_U:\s{O}^{\tu{loc}}_{\s{E}}(U)\ra\widehat{\sO}_{\sE}(U) \]
given by integration $\iota_U(\cL):s\mapsto\int_U\cL(s)$ defines the \textit{action} $S_U:=\iota_U(\cL):\sE(U)\ra\mathbb{R}$ of the Lagrangian distributional density $\cL$.  
\end{ex} 

\begin{rmk}
Note that the section $s$ in Example~\ref{laf} should be nilpotent since in most cases, the infinite sum will not converge.  Alternatively, we could define a Lagrangian density to be an element $\cL$ in $\s{O}^{\tu{loc}}_{\s{E}}$ which factors through $\prod_{n=0}^r((!_j\s{E}^\bot)^{\otimes n})_{S_n}$ for some finite $r$.
\end{rmk}

From Example~\ref{laf}, a Lagrangian sends a section $s$ in $\s{E}(U)$ to a formal power series in these variables, a density which, when evaluated on a point in $U$ depends only on the infinite jet at that point.   

We endow $!_j\sE$ with its canonical $\sD^\infty_M$-module structure.  This is the canonical flat connection given by the Cartan distribution of Section~\ref{jetcom}.  Then the convenient $\sC^\infty_M$-module
$!_j\sE^\bot=\uHom_{\sC^\infty_M}(!_j\sE,\sC^\infty_M)$ has a canonical $\sD^\infty_M$-module structure.  Therefore, a local functional is a $\sD^\infty_M$-module.  Again, every element $L$ of this module takes a section $s$ of $\sE(U)$ and returns a smooth function $L(s)$ in $\sC^\infty_M(U)$ with the property that $L(s)(x)$ depends only on the $\infty$-jet of $s$ at $x\in U$.

Now that the jet comonad is understood as a comonad on the category of convenient $\sC^\infty_M$-modules, we combine this result with the distributional comonad of Section~\ref{csdc}.  Pre-composition with $!_j$ gives the module $!_{j\delta}\sE:= !_\delta\circ !_j\sE$.  So $!_{j\delta}\sE$ is the $c^\infty$-closure of the linear span of $\delta(!_j\sE)$ in $(!_j\sE^*)^\bot$.  The two comonads interact in the expected way.   

\begin{lem}\label{jdstorage}
The composite comonad $!_{j\delta}$ is a storage comonad.  
\end{lem}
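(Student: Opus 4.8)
The plan is to exhibit a distributive law of $!_j$ over $!_\delta$, recognise $!_{j\delta}$ as the associated composite comonad, and then invoke Lemma~\ref{dchase}. First I would make the phrase ``the two comonads interact in the expected way'' precise by producing a natural transformation $\lambda:!_\delta\circ!_j\ra!_j\circ!_\delta$ given object-wise on Dirac generators by
\[  \lambda_\sE:!_\delta!_j\sE\ra!_j!_\delta\sE,\qquad \delta_{j_x(s)}\longmapsto j_x(\delta_s),  \]
where $\delta_s$ abbreviates the local section $\delta_\sE(s)$ of $!_\delta\sE$ and $j_x(\delta_s)$ denotes its infinite jet, then extending linearly and applying the separation and completion functor $\gamma$ from the proof of Lemma~\ref{deltacomonad}; this descends to the $c^\infty$-completion and is bornological because $\delta_\sE$ and jet prolongation are. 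One then checks that $\lambda$ is a distributive law, that is, that it is compatible with the comultiplications $\mu^j,\mu^\delta$ and the counits $\epsilon^j,\epsilon^\delta$ as in the two coherence diagrams preceding Lemma~\ref{dchase}. Since $\mu^j$ sends $j_x(s)$ to $j_x(j(s))$ and $\epsilon^j$ sends $j_x(s)$ to $s(x)$, while $\mu^\delta$ sends $\delta_u$ to $\delta_{\delta_u}$ and $\epsilon^\delta$ sends $\delta_u$ to $u$, each diagram reduces to an elementary identity on generators of the form $\delta_{j_x(s)}$, the remaining cases following by linearity and continuity. By the construction recalled after the definition of a distributive law, $\lambda$ then equips $!_{j\delta}=!_\delta\circ!_j$ with the structure of the unique composite comonad $(!_{j\delta},\mu^{j\delta},\epsilon^{j\delta})$, with $\mu^{j\delta}=(!_\delta\circ\lambda\circ!_j)\circ(\mu^\delta\circ\mu^j)$ and $\epsilon^{j\delta}=\epsilon^\delta\circ\epsilon^j$, whose underlying object is precisely $!_{j\delta}\sE=!_\delta!_j\sE$.

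It then remains to verify the hypotheses of Lemma~\ref{dchase}. On one hand $!_j$ is a product preserving comonad: the $r$-jet functor commutes with finite direct sums, hence $!_j(\sE\oplus\sE')=\lim_r\sJ^r(\sE\oplus\sE')\simeq\lim_r\big(\sJ^r(\sE)\oplus\sJ^r(\sE')\big)\simeq!_j\sE\oplus!_j\sE'$ and $!_j(0)\simeq0$, and since the finite categorical product in $\tu{ConMod}(\sC^\infty_M)$ coincides with $\oplus$ this is the Seely isomorphism for the cocartesian structure already recorded in Proposition~\ref{thm1}. On the other hand $!_\delta$ is a storage comonad for the closed symmetric monoidal structure $(\otimes,\sC^\infty_M)$ by Theorem~\ref{didi}: its cocommutative comonoid structure $(!_\delta\sE,c_\sE,e_\sE)$, the comonoid-morphism property of $\mu^\delta$, and the Seely isomorphisms $!_\delta(\sE\times\sE')\simeq!_\delta\sE\otimes!_\delta\sE'$ were all obtained in that proof.

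Applying Lemma~\ref{dchase} with $\mathsf{C}=\tu{ConMod}(\sC^\infty_M)$, monoidal product $\otimes$, finite products $\times\simeq\oplus$, $!_1=!_j$, $!_2=!_\delta$ and the distributive law $\lambda$, the composite comonad $!_{j\delta}=!_\delta\circ!_j$ is a storage comonad for $(\otimes,\sC^\infty_M)$: concretely $!_{j\delta}\sE$ inherits a cocommutative comonoid structure by applying that of $!_\delta$ to the object $!_j\sE$, the map $\mu^{j\delta}_\sE$ is a comonoid morphism, and the Seely isomorphisms follow from the two commuting squares in the proof of Lemma~\ref{dchase} by combining product preservation of $!_j$ with the Seely isomorphisms of $!_\delta$. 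I expect the only genuine work to lie in checking that $\lambda$ is a distributive law --- in particular that it is well defined after $c^\infty$-completion and natural; the reduction to Lemma~\ref{dchase} is then purely formal.
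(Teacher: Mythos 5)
Your proposal is correct and follows essentially the same route as the paper: a canonical distributive law between $!_\delta$ and $!_j$ coming from the jet/Dirac interaction (the paper expresses it dually via $\langle s, j(\delta)\rangle := \langle j(s),\delta\rangle$, you directly as $\delta_{j(s)}\mapsto j(\delta_s)$), followed by an appeal to Lemma~\ref{dchase} with $!_j$ product-preserving and $!_\delta$ a storage comonad. You are in fact more explicit than the paper's two-line proof; the only point to tighten is that $\lambda$ must be defined on all generators $\delta_\xi$ with $\xi\in!_j\sE$, not only on the holonomic ones $\xi=j(s)$.
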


\begin{proof}
There is a canonical distributive law of $!_j$ over $!_\delta$ since, by definition, operators act on distributions as
$   \< s, j(\delta) \> := \<j(s),\delta\> $
and therefore $!_{\delta j}\simeq !_{j\delta}$ is an isomorphism of comonads.  The result now follows from Lemma~\ref{dchase}.
\end{proof}

The comonad $!_{j\delta}$ will be called the \textit{jet-distributional comonad}.  Now consider the map $\bar{d}^{j\delta}:\id\ra !_{j\delta}$ given by
\[   \bar{d}^{j\delta}_{\sE}(s)=\underset{h\ra 0}\lim\frac{\delta_{h(j(s))}-\delta_0}{h}  \]
for $s\in\sE$.

\begin{thm}\label{jdt}
The category of convenient $\sC^\infty_M$-modules with the jet-distributional comonad $!_{j\delta}$ and map $\bar{d}^{j\delta}$ is a model for intuitionistic differential linear logic.
\end{thm}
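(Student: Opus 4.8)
The plan is to establish, for the category $\tu{ConMod}(\sC^\infty_M)$ equipped with the comonad $!_{j\delta}$ and the monoidal product $\otimes=\otimes_{\sC^\infty_M}$, the three ingredients in the definition of a model for intuitionistic differential linear logic: that it is a closed $\CMon$-enriched symmetric monoidal category with finite products, that $!_{j\delta}$ is a storage comonad, and that $\bar{d}^{j\delta}$ is a codereliction. The first ingredient was already obtained in the proof of Theorem~\ref{didi}, since it is a property of the ambient category alone (locally presentable, closed symmetric monoidal, additive --- hence $\CMon$-enriched and with finite biproducts). The second is exactly Lemma~\ref{jdstorage}: the canonical distributive law $\lambda$ of $!_j$ over $!_\delta$ is an isomorphism (so $!_{\delta j}\simeq!_{j\delta}$), $!_j$ preserves finite products, and $!_\delta$ is a storage comonad by Theorem~\ref{didi}, so Lemma~\ref{dchase} applies; and, as in that proof, the cocommutative comonoid and commutative monoid structures on $!_{j\delta}\sE=!_\delta(!_j\sE)$ are simply those attached by the storage comonad $!_\delta$ to the object $!_j\sE$: on generators, $\bar{c}^{j\delta}_\sE:\delta_\xi\otimes\delta_{\xi'}\mapsto\delta_{\xi+\xi'}$, $\bar{e}^{j\delta}_\sE:1\mapsto\delta_0$, $c^{j\delta}_\sE:\delta_\xi\mapsto\delta_\xi\otimes\delta_\xi$ and $e^{j\delta}_\sE:\delta_\xi\mapsto 1$ for $\xi\in!_j\sE$. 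So everything reduces to checking the codereliction axioms for $\bar{d}^{j\delta}$.

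The observation I would exploit is that, reading off the defining formulas for $\bar{d}^{j\delta}$ and for the distributional codereliction $\bar{d}^\delta$ of Theorem~\ref{didi}, there is a factorisation
\[
  \bar{d}^{j\delta}_\sE = \bar{d}^\delta_{!_j\sE}\circ j_\sE,\qquad \sE\xra{\ j_\sE\ }!_j\sE\xra{\ \bar{d}^\delta_{!_j\sE}\ }!_\delta!_j\sE=!_{j\delta}\sE,
\]
where $j_\sE:\sE\ra!_j\sE$ is the infinite jet prolongation --- the natural transformation $j:\id\ra!_j$ underlying the jet codereliction $\bar{d}^j$, which satisfies the linear rule $\epsilon^j_\sE\circ j_\sE=\id_\sE$. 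Naturality of $\bar{d}^{j\delta}$ then follows formally: for a morphism $F:\sE\ra\sE'$ one combines naturality of $\bar{d}^\delta$ (proved in Theorem~\ref{didi}) with the identity $!_j(F)\circ j_\sE=j_{\sE'}\circ F$. By Theorem~6 and Corollary~4 of \citep{BCLS}, exactly as in the proof of Theorem~\ref{didi}, it then remains to verify the linear and chain rules of Definition~\ref{codereliction}. The linear rule is immediate: since $\epsilon^{j\delta}_\sE=\epsilon^j_\sE\circ\epsilon^\delta_{!_j\sE}$, the factorisation together with the linear rule for $\bar{d}^\delta$ give $\epsilon^{j\delta}_\sE\circ\bar{d}^{j\delta}_\sE=\epsilon^j_\sE\circ(\epsilon^\delta_{!_j\sE}\circ\bar{d}^\delta_{!_j\sE})\circ j_\sE=\epsilon^j_\sE\circ j_\sE=\id_\sE$.

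For the chain rule I would expand the composite comultiplication $\mu^{j\delta}=(!_\delta\circ\lambda\circ!_j)\circ(\mu^\delta\circ\mu^j)$ and trace both sides of the chain-rule identity on generators $s\otimes\delta_{j(t)}\in\sE\otimes!_{j\delta}\sE$, imitating the diagonal-limit computation of Theorem~\ref{didi} but now over the object $!_j\sE$ rather than $\sE$, with $s,t$ replaced by the prolongations $j(s),j(t)$. Since the comonoid and monoid operations of $!_{j\delta}\sE$ are those of $!_\delta$ on $!_j\sE$, and since $\mu^j$ enters only through the distributive law $\lambda$, which by Lemma~\ref{jdstorage} is an isomorphism of comonads, I expect both sides to collapse --- after moving $\mu^j$ past the Dirac maps via $\lambda$ --- to the identity
\[
  \underset{h'=h\ra 0}{\lim}\,\frac{\delta_{(\frac{h'}{h}(\delta_{hj(s)+j(t)}-\delta_{j(t)})+\delta_{j(t)})}-\delta_{\delta_{j(t)}}}{h'}
\]
already obtained in the proof of Theorem~\ref{didi}.

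The hard part will be this last reduction: one must check that the distributive law $\lambda$ of $!_j$ over $!_\delta$ is compatible not merely with the two comonad structures (which is Lemma~\ref{jdstorage}) but also with the additive bialgebra modality and with the deriving transformation, so that the chain-rule computation of Theorem~\ref{didi} transports verbatim along the jet prolongation. Concretely this means verifying that $\mu^j$ relates the $\bar{d}^\delta$-structure on $\sE$ to the $\bar{d}^\delta$-structure on $!_j\sE$ in the way forced by $\bar{d}^{j\delta}=\bar{d}^\delta_{!_j\sE}\circ j$; I expect here to use the geometry of the Cartan connection, namely that $j(s)$ is a horizontal section, so that $j(j(s))=\mu^j_\sE(j(s))$ is already determined by $j(s)$. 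Once this compatibility is in place, the chain rule --- and with it the theorem --- follows.
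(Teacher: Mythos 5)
Your proposal follows the paper's own route: Lemma~\ref{jdstorage} supplies the storage comonad, and the codereliction axioms for $\bar{d}^{j\delta}$ are obtained by running the argument of Theorem~\ref{didi} at the object $!_j\sE$, which is exactly what your factorisation $\bar{d}^{j\delta}_\sE=\bar{d}^\delta_{!_j\sE}\circ j_\sE$ (together with the reduction to the linear and chain rules via Theorem~6 and Corollary~4 of \citep{BCLS}) makes precise. The compatibility of the distributive law with $\mu^{j\delta}$ and the bialgebra structure that you single out as ``the hard part'' of the chain rule is precisely the step the paper compresses into ``apply the corresponding proof to $!_j\sE$'', so your treatment is, if anything, more explicit than the published one.
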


\begin{proof}
By Lemma~\ref{jdstorage}, the jet-distributional comonad is a storage comonad.  The remainder of the proof is obtained by applying the corresponding proof in Theorem~\ref{didi} to the convenient $\sC^\infty_M$-module $!_j\sE$.
\end{proof}

The map $\bar{d}^{j\delta}:\id\ra !_{j\delta}$ will be called the \textit{jet-distributional codereliction}.  We have a linear-non-linear adjunction
\[ \begin{tikzcd}
   \tu{ConMod}(\s{C}^\infty_M)_{!_{j\delta}}\arrow[bend left]{r}{X}  
& \tu{ConMod}(\s{C}^\infty_M) \arrow[bend left]{l}{U}  
\end{tikzcd}  \]
where the functor $X$ sends an object $\sE$ to the $c^\infty$-closure of the linear span of $\delta(\sJ(\sE))$ and the functor $U$ is a bijection on objects.  Objects on the left hand side are convenient vector bundles and whose morphisms, owing to Proposition~\ref{kls}, include \textit{non-linear} partial differential operators $\widehat{F}^{sm}:\sE\ra\sE'$.  Indeed, let $F:!_{j\delta}\sE\ra\sE'$ be a morphism of $\sC^\infty_M$-modules and consider the diagram
\[
\begin{tikzcd}
                & !_{j\delta}\sE \arrow[d,"\epsilon^\delta_{!_j\sE}"] \arrow[ddr,bend left=20,near start,"F"]  \\
                          &   !_j\sE  \arrow[d,"\epsilon^j_{\sE}"] \arrow[dr,near start,"{F}^{sm}"]  \\
* \arrow[ur,near end,"j(s)"] \arrow[uur,bend left=20,"\delta_{j(s)}"] \arrow[r,"s"]   &\sE \arrow[r, dashrightarrow,"\widehat{F}^{sm}"] & \sE'
\end{tikzcd}
\]
in $\tu{ConMod}(\s{C}^\infty_M)$.  We have $F(\delta_{j(s)})\simeq F^{sm}(j(s))\simeq\widehat{F}^{sm}(s)$.  Moreover, taking advantage of the closed structure and using the notation of linear logic, we have a commutative diagram
\[
\begin{tikzcd}
!_{j\delta}\sE\multimap\sE'  \arrow[r,"\circ j"]  \arrow[d,"\circ\overline{d}^\delta_\sE"]  & !_\delta\sE\multimap\sE' \arrow[d,"\circ\overline{d}^\delta_\sE"] \\
!_j\sE\multimap\sE'    \arrow[r,"\circ j"]   & \sE\multimap\sE'
\end{tikzcd} 
\]
sending the convenient $\sC^\infty_M$-module of smooth local functionals to the convenient $\sC^\infty_M$-module of linear functionals.

The deriving transformation $\overline{\partial}_\sE:\sE\otimes!_{j\delta}\sE\ra!_{j\delta}\sE$ is defined as the composite 
\[  
\overline{\partial}_\sE: t\otimes\delta_{j(s)}\xmapsto{(\bar{d}^{j\delta}_{\sE}\otimes 1)}\left(\underset{h\ra 0}\lim\frac{\delta_{h(j(t))}-\delta_0}{h}\right)\otimes\delta_{(j(s))}\xmapsto{\bar{c}_{\sE}}\underset{h\ra 0}\lim\frac{\delta_{(j(s)+h(j(t)))}-\delta_{j(s)}}{h}  
\]
and the derivative $\tu{d}F:=F\circ\overline{\partial}_\sE:\sE\otimes!_{j\delta}\sE\ra\sE'$ of $F:!_{j\delta}\sE\ra\sE'$ in $\tu{ConMod}(\s{C}^\infty_M)$ is given as
\[
\tu{d}F: t\otimes\delta_{j(s)}\mapsto\underset{h\ra 0}\lim\frac{F(\delta_{j(s)+hj(t)})- F(\delta_{j(s)})}{h}. 
\]

By abuse of notation, we have an operator on smooth differential operators
\[  \tu{d}:\tu{Diff}^{\,\tu{sm}}(\sE,\sE')\ra\Hom(\sE,\mathscr{D}_M(\sE,\sE'))  \]
defined by
\[   \tu{d}\widehat{F}^{sm}(s,t)=\underset{h\ra 0}\lim\frac{\widehat{F}^{sm}(s+ht)-\widehat{F}^{sm}(s)}{h} =  \left.\frac{d}{dh}\right\vert_{h=0}\widehat{F}^{sm}\big(s+ht)  \]
which is linear and bounded, ie. $\tu{d}\widehat{F}^{sm}(s,t)$ is the deriviative of the smooth local functional $\widehat{F}^{sm}$ at $s$ in the direction $t=ds$.

When our sheaf is finite dimensional we have the following more explicit description of non-linear local functionals.

\begin{ex}
Let $\sE$ be a finite dimensional convenient $\sC^\infty_M$-module.  Then there exists an isomorphism
\[   !_{j^r\delta}\sE\simeq (!_{j^r}\sE^*)^\bot  \]
of convenient $\sC^\infty_M$-modules.  This can be deduced from Corollary~5.1.8 of \citep{FK}.
\end{ex}

When our smooth functionals are given by formal power series, we also have a more explicit description.  We endow $\s{D}ens_M$ with its right $\sD_M^\infty$-module structure.  Then a local density on $U\subseteq M$ with respect to $\sE$ is an element $\omega_U\otimes L$ in the space
\[      \sO^{\loc}_{\sE(U)}\simeq\s{D}ens_M(U)\otimes_{\s{C}^\infty_M(U)}\widehat{\Sym}_{\mathbb{R}}(!_j\overline{\sE}(U)^\vee)  \]
with its canonical $\sD^\infty_M(U)$-module structure where $!_j\sE(U)^\vee=\uHom(!_j\sE(U),\sC^\infty_M(U))$.  In words, the element $\omega_U\otimes L$ sends a section $s$ in $\sE(U)$ to a distributional density $\omega_U\otimes L(s)$ on $U$ such that $(\omega_U\otimes L(s))(x)$ depends only on the $\infty$-jet of $s$ at $x\in U$.  

We obtain a sheaf $\sO^{\loc}_{\sE}$ on $M$ which is moreover a $\sC^\infty_M$-module.  It is a subsheaf $\sO^\loc_{\sE}\subset!_j\sE^*$ and the $c^\infty$-closure of the linear span of $\delta(!_j\sE)$ in $(!_j\sE^*)^\bot$ factors through $(\sO^\loc_{\sE})^\bot$.  This is a subspace of $!_{j\delta}\sE(U)$.  This restricted delta distribution $\delta$ sends $j(s)$ to 
\[  \delta_{j(s)}:\omega_U\otimes L\mapsto\int_U L(j(s))\omega_U  \]
where $L$ is a formal power series.  In local coordinates on $U\subseteq M$, and using integration by parts, we have 
\[    \tu{d}S(s)=\tu{d}\int_U L(j(s))\omega_U=\int_U el_\alpha(L)\tu{d}s^\alpha\wedge\omega_U + D_\alpha V^\alpha  \]
for some total derivative $D_\alpha V^\alpha$ where $el_\alpha=\sum_I (-D)_{I} \frac{\partial}{\partial u_{I}^{\alpha}}$ is the Euler-Lagrange operator.  Here $(-D)_I=(-D_{i_1})(-D_{i_2})\ldots$ for the multi-index $I$.  

Therefore, a Lagrangian is only defined up to a total derivative for compactly supported sections.  This can be exploited by forming the tensor product
\[      \sO^{\tu{red}}_{\sE}\simeq\s{D}ens_M\otimes_{\s{D}^\infty_M}\widehat{\Sym}_{\mathbb{R}}(!_j\overline{\sE}^\vee)  \]
over $\sD^\infty_M$.  Therefore $\tu{d}S(s)=0$ if and only if the section $s$ satisfies the Euler-Lagrange equations $el_\alpha(L(j(s)))=0$.  Symmetries of the action can also be interpreted as vector fields on the jet bundle (cf. Remark~\ref{symmetries}).  

We end by giving a concrete application of this construction.

\begin{ex}[Free and interacting scalar fields]
Fix a $n$-dimensional compact Riemannian manifold $(M,g)$.  Consider the sheaf $\sE$ of sections of the trivial bundle $\pi:E:=M\times\mathbb{R}\ra M$, ie. $\sE$ is simply the sheaf $\sC^\infty_M$ of smooth functions on $M$.  There exists an isomorphism $(!_j\sC^\infty_M)^\bot\simeq\sD_M^\infty$ of (left) $\sD^\infty_M$-modules.  The elements $\widehat{L}$ in $(!_j\sC^\infty_M)^\bot$ are spanned by elements of the form $\phi(x_i)\partial_{I}$ for $x_i\in M$ and a partial differential operator $\partial_I$ depending on a multi-index $I$.  

Let $\phi\in\sC^\infty_M(M)$ be a scalar field.  We consider the special forms of $\widehat{L}$ given by
\[   \widehat{L}(j(\phi))=\phi D\phi,  \quad\quad \widehat{L}(j(\phi))=\phi D\phi + \eta\phi,   \quad\quad {L}(\delta_{j(\phi)})=\phi D\phi  + V(\phi)   \] 
where the Laplacian $D$ is the differential operator $D:\s{C}_M^\infty(M)\ra\s{C}_M^\infty(M)$ sending $\phi$ to $\Delta_g\phi$ and the density is the canonical volume form.  The first two functionals are linear local functionals whereas the last functional is merely smooth in general.  The functional derivative of the local action functional $S$ associated to $L$ is   
\[  \tu{d}S(\phi) =\int_M\tu{d}\widehat{L}^{sm}(\phi)\omega_M =\int_M el(\widehat{L}^{sm}(\phi))\tu{d}\phi\,vol_g  \]
where, for local coordinates $(x^i,\phi,\phi_i)$, the Euler-Lagrange equations are
\[ 
el(\widehat{L}^{sm}(\phi))= \frac{\partial\widehat{L}^{sm}}{\partial\phi}-\frac{\partial}{\partial x^i}\left(\frac{\partial\widehat{L}^{sm}}{\partial\phi_i}\right).
\]
The principle of least action $\tu{d}S=0$, or equivalently $el(\widehat{L}^{sm}(\phi))=0$, leads to the partial differential equations 
\[  \Delta_g\phi = 0,   \quad\quad  \Delta_g\phi = \eta , \quad\quad  \Delta_g\phi = -V'(\phi)   \]
which are the Laplace, Poisson and non-linear Poisson equation respectively.  These define a vector subbundle, affine subbundle and fibered submanifold of $J^2(M\times\mathbb{R})$ respectively.  

To see this, let $(x^i,\phi,\phi_i,\phi_{ij})$ be coordinates on $J^2(M\times\mathbb{R})$ and consider the function $f(x^i,\phi,\phi_i,\phi_{ij})=\sum_{1\leq i\leq n}\phi_{ii}$ on $J^2(M\times\mathbb{R})$.  The preimage of $0$, $\eta(x)$ and $-V'(\phi)$ with respect to $f$ define a fibered submanifold ${H}^2\subseteq J^2(M\times\mathbb{R})$.  Taking the infinite prolongation of the equation $H^2$ we obtain the equation $H$ which, assuming $H^2$ is regular, is a pro-ind vector subbundle, pro-ind affine subbundle and pro-ind fibered submanifold of $J(M\times\mathbb{R})$ respectively.  A local section $\phi$ of $\pi:M\times\mathbb{R}\ra M$ is a solution of these equations if and only if $j^2\phi(x_i)\in\s{H}^2\simeq\s{H}$. 
\end{ex}

\section{Conclusion}

We have shown that the category of convenient sheaves is a model for intuitionistic differential linear logic.  Using the jet comonad for the exponential modality gives an interpretation of linear differential operators, and hence linear partial differential equations, in linear logic.  Alternatively, using the distributional comonad for the exponential gives an interpretation of smooth morphisms between objects in these categories.  Composing these comonads provides an interpretation of non-linear differential operators and the variational calculus of smooth local functionals within linear logic.

Some interesting questions remain open.  The most pressing item is to elucidate the internal logic of the model in order to provide a computational interpretation of its structure within differential $\lambda$-calculus.  Indeed, the Kleisli category of a model for intuitionistic differential linear logic is a cartesian closed differential category and it is these categories, introduced in \citep{BEM} as \textit{differential $\lambda$-categories}, that are models of the simply typed differential $\lambda$-calculus.  See \citep{Man,BCS4} for more details.

Other interesting questions include the extension to classical differential linear logic \citep{Gi2}, the exploration of antiderivatives and integration from the perspective of \citep{Eh3} and the application of reverse-mode differentiation from \citep{CCG}.  Finally, we would like an expansion of the category of vector bundles to include ``non-smooth" structures.  This requires the introduction of tools from synthetic and derived differential geometry.  These more elaborate structures are needed to make sense of non-linear partial differential equations and their moduli space of solutions within the context of models of differential linear logic.

\bibliographystyle{abbrvnat}
\bibliography{Jets}

\end{document}